\newcommand{\opt}{\ensuremath{\text{\textsf{OPT}}}}
\newcommand{\cost}{\ensuremath{\text{\textsf{COST}}}}
\newcommand{\ofl}{\ensuremath{\text{\footnotesize\textsf{OFL}}}}
\newcommand{\dist}{d}
\newtheorem{theorem}{Theorem}
\newtheorem{lemma}{Lemma}
\newtheorem{definition}{Definition}
\newtheorem{proposition}{Proposition}
\newtheorem{example}{Example}
\newcommand{\np}{h \log\Delta}
\newcommand{\npv}{h}
\newcommand{\cram }{c_r}
\newcommand{\ram}{\ensuremath{\text{\footnotesize\textsf{KM-RAM}}}}
\newcommand{\ramcoreset}{\ensuremath{\text{\footnotesize\textsf{CS-RAM}}}}
\providecommand{\email}[1]{\href{mailto:#1}{\nolinkurl{#1}\xspace}}
\title{Improved Algorithms for Time Decay Streams}
\author{
Vladimir Braverman\thanks{Department of Computer Science, Johns Hopkins University, Baltimore, MD, USA.
Email: \email{vova@cs.jhu.edu}.
This material is based upon work supported in part by the National
Science Foundation under Grant No. 1447639,
by the Google Faculty Award and by DARPA grant N660001-1-2-4014. Its contents are
solely the responsibility of
the authors and do not represent the official view of DARPA or the Department of Defense.
}
\and
Harry Lang\thanks{MIT CSAIL, Cambridge, MA, USA.
Email: \email{harry1@mit.edu}.}
\and
Enayat Ullah \thanks{Department of Computer Science, Johns Hopkins University, Baltimore, MD, USA.
Email: \email{enayat@jhu.edu}.}
\and
Samson Zhou\thanks{School of Informatics, Computing, and Engineering, Indiana University, Bloomington, IN, USA. 
Email: \email{samsonzhou@gmail.com}.}
}
\date{\today}
\begin{document}

\maketitle

\begin{abstract}
In the time-decay model for data streams, elements of an underlying data set arrive sequentially with the recently arrived elements being more important. 
A common approach for handling large data sets is to maintain a \emph{coreset}, a succinct summary of the processed data that allows approximate recovery of a predetermined query. 
We provide a general framework that takes any offline-coreset and gives a time-decay coreset for polynomial time decay functions.

We also consider the exponential time decay model for $k$-median clustering, where we provide a constant factor approximation algorithm that utilizes the online facility location algorithm. 
Our algorithm stores $\bigO{k\log(h\Delta)+h}$ points where $h$ is the half-life of the decay function and $\Delta$ is the aspect ratio of the dataset. Our techniques extend to $k$-means clustering and $M$-estimators as well.
\end{abstract}

\section{Introduction}
The \emph{streaming model} of computation has become an increasingly popular model for processing massive datasets. 
In this model, the data is presented sequentially, and the objective is to answer some pre-defined query. 
The overwhelmingly large size of the dataset imposes a number of restrictions on any algorithm designed to answer the pre-defined query. 
For example, a streaming algorithm is permitted only a few passes, or in many cases, only a single pass over the data. 
Moreover, the algorithm should also use space sublinear in, or even logarithmic in, the size of the data. 
For more details on the background and applications of the streaming model, \cite{BabcockBDMW02, Muthukrishnan05, Aggarwal07} provide excellent surveys. 

Informally, a coreset for a given problem is a small \emph{summary} of the dataset such that the \emph{cost} of any candidate solution on the coreset is approximately the same as the cost in the original set.
Coresets have been used in a variety of problems, including generalized facility locations \cite{FeldmanFS06}, $k$-means clustering \cite{FeldmanMS07, braverman2016new}, principal component analysis \cite{FeldmanSS13}, and $\ell_p$-regression \cite{DasguptaDHKM09}. 
Coresets also have a number of applications in distributed models (see \cite{IndykMMM14, MirrokniZ15, BarbosaENW16, AssadiK17}, for example). 
To maintain the coresets throughout the data stream, one possible approach is the so called merge-and-reduce method, in which the multiple sets may be adjusted and combined. 
Several well-known coreset constructions \cite{Har-PeledM04, chen2009coresets} for the $k$-median and $k$-means problems are based on the merge-and-reduce paradigm.

\subsection{Motivation}
Many applications discard obsolete data, choosing to favor relatively recent data to base their queries. 
This motivates the \emph{time decay} model, in which there exists a function $w$ so that the weight of the $t^\text{th}$ most recent item is $w(t)$. 
Note that this is a generalization of both the \emph{insertion-only} streaming model, where $w(t)=1$ for all $t$, and the \emph{sliding-window} model, where $w(t)=1$ for the most recent $W$ items, and $w(t)=0$ for $t>W$. 
In this paper, we study the problem of maintaining coresets over a polynomial decay model, where $w(t)=\frac{1}{t^s}$ for some parameter $s>0$, and an exponential decay model, where $w(t)=2^{\frac{T-t+1}{h}}$ at time $T$ for some \emph{half-life} parameter $h>0$. 

Although exponential decay model is well-motivated by natural phenomena that exhibit half-life behavior, \cite{CohenS03} notices that exponential decay and the sliding window model is often insufficient for many applications because the decay occurs too quickly and suggests that polynomial decay may be a reasonable alternative for some applications, such as availability of network links. 
For example, consider a network link that fails at every time between $10$ and $60$ and a second network link that fails once at time $75$. 
Intuitively, it seems like the second link should be better, but under many parameters, the exponential decay model and sliding window model will both agree that the first link is better. 
Fortunately, under the polynomial decay model, events that occur near the same time have approximately the same weight, and we will obtain some view in which the first link is preferred \cite{KopelowitzP05}. 
In practice, time decay functions have been used in natural language understanding to give more importance to recent utterances than the past ones \cite{su2018time}.

\paragraph{Organization.}The rest of the paper is organized as follows. In Section \ref{sec:results}, we summarize the main results of the paper and the algorithmic approaches. In Section \ref{sec:related}, we discuss the related work, and in Section \ref{sec:prelim}, we formalize the problem and discuss the preliminaries required. In Sections \ref{sec:poly} and \ref{sec:exp}, we handle the polynomial and exponential decay, respectively, in detail, wherein we present  the algorithmic details as well as the complete analysis.

\section{Our Contributions}
\label{sec:results}
We summarize our results and give a high-level idea of our approach for problems in the polynomial and exponential decay models in the following subsections respectively. 
The reader is encouraged to go through Sections \ref{sec:poly} and \ref{sec:exp} for details.

\subsection{Polynomial decay}
In the polynomial decay model, a stream of points $P$ arrives sequentially and the weight of the $t^\text{th}$ most recent point, denoted as $w(t)$, is $w(t) = \frac{1}{t^s}$ where $s>0$ is a given constant parameter of the decay function. 
We first state a theorem that shows that we can use an offline coreset construction mechanism to give a coreset for the polynomial decay model.

\begin{theorem}
\label{thm:poly} Given an algorithm that takes a set of $n$ points as input and constructs an $\epsilon$-coreset of $F(n,\epsilon)$ points in $\bigO{n T(\epsilon)}$ time, there exists a polynomial decay algorithm that maintains an $\epsilon$-coreset while storing $\bigO{\epsilon^{-1} \log n \, F\left(n, \frac{\epsilon}{\log n}\right)}$ points and with $\bigO{\epsilon^{-1}\ \log n \ F(n,\epsilon) \, T(\epsilon/\log n)}$ update time.
\end{theorem}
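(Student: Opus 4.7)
The plan is to exploit the fact that within geometrically-sized age windows the polynomial weight $w(t)=t^{-s}$ is nearly constant, reducing the time-decay problem to $O(\epsilon^{-1}\log n)$ essentially uniform-weight subproblems that can be handled by the given offline construction as a black box. First, I would partition the stream by position-from-the-present into windows $W_i$ containing points whose age lies in the dyadic interval $(2^{i-1}, 2^i]$; only $O(\log n)$ windows are nonempty, and within each window the weight ratio is at most $2^s=O(1)$. I would then refine each $W_i$ into $O(\epsilon^{-1})$ sub-groups so that all weights within a sub-group $G$ lie within a $(1\pm\epsilon)$ multiplicative factor of a common representative $\bar w_G$. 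This yields $N=O(\epsilon^{-1}\log n)$ groups, and replacing every point's true weight by $\bar w_G$ within $G$ incurs only a $(1\pm\epsilon)$ multiplicative error on $G$'s cost contribution.

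For each group $G$ I would maintain an unweighted $(\epsilon/\log n)$-coreset of its points by feeding the given offline construction into a standard merge-and-reduce tree. The tree has height $O(\log n)$, and a routine composition argument shows that repeatedly applying an $(\epsilon/\log n)$-coreset construction at each internal node produces a final multiplicative error of $(1+\epsilon/\log n)^{\log n}\le e^{\epsilon}\le 1+2\epsilon$. Reweighting this compressed set uniformly by $\bar w_G$ gives a weighted coreset for $G$; taking the union over all $N$ groups yields the advertised total of $O(\epsilon^{-1}\log n\cdot F(n,\epsilon/\log n))$ points. Since cost is additive across groups and each group-level error is multiplicative, the overall error on the total weighted cost is $(1\pm O(\epsilon))$, which can be tightened to $(1\pm\epsilon)$ by rescaling the precision parameters by constants.

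The main obstacle is the dynamics: as each new point arrives every existing point ages by one, so boundaries between windows and sub-groups drift over time. I would handle this by always appending the newcomer to the freshest sub-group and, on every arrival, inspecting the $O(\log n)$ boundaries that may have shifted; any point whose weight has fallen outside its current sub-group's tolerated range is migrated to the adjacent sub-group and the affected merge-and-reduce trees are rebuilt. Amortizing the cost of these migrations over the lifetime of each point yields the claimed update time $O(\epsilon^{-1}\log n\cdot F(n,\epsilon)\cdot T(\epsilon/\log n))$. Verifying that migration preserves the coreset guarantee---i.e.\ that rebuilding a merge-and-reduce tree from its current coreset rather than from the raw points does not further inflate the error---is the most delicate step, but it follows from the standard closure property that applying an $\epsilon'$-coreset construction to an $\epsilon'$-coreset of a set $S$ yields a $(1+\epsilon')^2$-coreset of $S$, which is already absorbed into the height-$O(\log n)$ error bound above.
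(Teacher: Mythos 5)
Your high-level plan --- bucket the stream into $O(\epsilon^{-1}\log n)$ groups of $(1\pm\epsilon)$-uniform weight, keep an unweighted $(\epsilon/\log n)$-coreset per group via merge-and-reduce, and reweight each group by a representative weight --- is the same as the paper's, and both of your error observations (reweighting within $(1\pm\epsilon)$ costs only $O(\epsilon)$; composing over depth $\log n$ costs only $O(\epsilon)$) correspond to Lemma \ref{lem:polydecay1} and Proposition \ref{prop:coresets}. The gap is the migration step, which you correctly flag as delicate but do not resolve, and which cannot work as described. Because your groups are defined by age, their boundaries sweep through the data: the oldest points of a sub-group must leave while the younger ones stay, so migration requires \emph{splitting} a compressed group. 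Coresets are closed under union and composition but not under splitting --- there is no partition of a group's coreset points that is guaranteed to be a coreset of the departing sub-population (the coreset points need not even be original stream points and carry no well-defined age). Moreover, even granting some migration mechanism, your error accounting fails: a point traverses $\Theta(\epsilon^{-1}\log n)$ sub-groups over its lifetime and is re-compressed at each migration, so the compounded error is $\bigl(1+\epsilon/\log n\bigr)^{\Theta(\epsilon^{-1}\log n)}=e^{\Theta(1)}$, a constant rather than $1+O(\epsilon)$; your height-$O(\log n)$ bound only counts compressions inside a single tree, not the re-compressions forced by migration.

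The observation your proposal is missing, and the one the paper's construction is built on, is that under polynomial decay the weight ratio of two fixed stream positions, $\bigl((t+a)/(t+b)\bigr)^{s}$, decreases monotonically to $1$ as time passes: once two points' weights are $(1\pm\epsilon)$-close they remain so forever. Hence groups can be indexed by stream position and only ever \emph{merged}, never split. The paper precomputes markers $x_i$ (the smallest age at which $2^i$ consecutive points are guaranteed permanently $(1\pm\epsilon)$-close weights) and merges blocks on that schedule; each point then participates in only $\log n$ reduce operations over its entire lifetime, since its block doubles in size at each merge, so $(\epsilon/(3\log n))$-accuracy per reduce suffices and no splitting or migration ever occurs. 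If you replace your age-indexed windows with position-indexed, merge-only blocks scheduled by such markers, the rest of your argument goes through essentially verbatim.
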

Theorem \ref{thm:poly} applies to any time-decay problem on data streams that admits an approximation algorithm using coresets. 
Among its applications are the problems of $k$-median and $k$-means clustering, $M$-estimator clustering, projective clustering, and subspace approximation. 
We list a few of these results in Table \ref{table:coresets}.
Our result is a generalization of the \emph{vanilla} merge-and-reduce approach used to convert offline coresets to streaming counterparts. 
In particular, plugging in $s=0$, we get the vanilla streaming model, and the theorem recovers the corresponding guarantees.

\begin{table}[H]
    \centering
    \begin{tabular}{|c|c|c|}
    \hline
    Problem & Coreset size & Offline algorithm \\
    \hline 
    \hline
    Metric $k$-median clustering & $\bigO{\frac{s}{\epsilon^3}k\log k\log^4 n}$ & \cite{FeldmanL11} \\
    \hline
    Metric $k$-means clustering & $\bigO{\frac{s}{\epsilon^3}k\log k\log^4 n}$ & \cite{braverman2016new}\\
     \hline
    Metric $M$-estimator & $\bigO{\frac{s}{\epsilon^3}k\log k\log^4 n}$ & \cite{braverman2016new}\\
    \hline
   $j^\text{th}$ subspace approximation  & $\bigO{\frac{j^2s}{\epsilon^4}\log^8 n\log\br{\frac{\log n}{\epsilon}}}$ & \cite{FeldmanL11} \\
    \hline
    Low rank approximation  & $\bigO{\frac{s}{\epsilon^2} \, kd \log n}$ & \cite{ghashami2016frequent} \\
    \hline
    \end{tabular}
    \caption{Coresets for some problems in polynomial decay streams }
    \label{table:coresets}
\end{table}

\paragraph{Approach.}
A natural starting point would be to attempt to generalize existing sliding window algorithms to time decay models. 
These algorithms typically use a histogram data structure~\cite{BravermanO07}, in which multiple instances of streaming algorithms are started at various points in time, one of which well-approximates the objective evaluated on the data set represented by the sliding window. 
However, generalizing these histogram data structures to time-decay models does not seem to work since the weights of all data points changes upon each new update in time-decay model, whereas streaming algorithms typically assume static weights for each data point. 

Instead, our algorithm partitions the stream into blocks, where each block represents a disjoint collection of data point between certain time points. 
Each arriving element initially begins as its own block, containing one element. 
The algorithm maintains an unweighted coreset for each block, and merges blocks (i.e corresponding coresets) as they become older. 
However, at the end, each block is to be weighted according to some function, and so the algorithm chooses to merge blocks when the weights of the blocks become ``close''. 
Thus, a coreset for each block will represent the set of points well, as the weights of the points in each block do not differ by too much.


\subsection{Exponential decay}
We also provide an algorithm that achieves a constant approximation for $k$-median clustering in the exponential decay model. 
Our guarantees also extend to $k$-means clustering and $M$-estimators.

Given a set $P$ of points in a metric space, let $\Delta$ denote its aspect ratio i.e the ratio between the largest and (non-zero) smallest distance between any two points in $P$. 
The weight of the $t^\text{th}$ most recent point at time $T$ is $w(t) = 2^{\frac{T-t+1}{h}}$ where $h > 0$ is the half-life parameter of the exponential decay function.
\begin{theorem}
 \label{thm:exp}
There exists a streaming algorithm that given a stream $P$ of points with exponentially decaying weights, with aspect ratio $\Delta$ and half-life $h$, produces an $\bigO{1}$-approximate solution to k-median clustering. The algorithm runs in $\bigO{nk\log (h \Delta)}$ time and uses $\bigO{k \log (h\Delta) + h}$ space.
\end{theorem}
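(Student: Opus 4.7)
The plan is to partition the stream into consecutive windows of length $h$ (one half-life), maintain a small coreset per window using an online facility location (OFL) subroutine, and discard windows that are too old to materially affect the objective.

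First, let $B_0$ denote the most recent $h$ points (the active buffer), and for $i \geq 1$ let $B_i$ denote the window of $h$ points that arrived between $ih$ and $(i+1)h - 1$ time steps ago. Because the half-life of the decay is exactly $h$, the weights inside any single $B_i$ differ by at most a factor of $2$, so up to a constant factor we may treat them as uniform and represent $B_i$ by a single representative weight $w_i$. A direct calculation then shows that for $i = \Omega(\log(h\Delta))$, the total weight of $B_i$ is at most a $1/\Delta$ fraction of the weight of the most recent point; combining this with the diameter bound $\mathrm{diam}(P) \leq \Delta \cdot d_{\min}$ and the trivial lower bound $\opt = \Omega(w_0 \cdot d_{\min})$ that holds whenever the stream is nondegenerate, the total contribution of all such old blocks is absorbed into an $O(1)$ approximation and may be discarded. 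This truncation leaves only $O(\log(h \Delta))$ active windows.

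Second, for each active $B_i$ with $1 \leq i = O(\log(h\Delta))$, I would run OFL with an appropriately tuned facility cost on the (uniform-weight proxy of the) points of $B_i$ as they arrive, producing a weighted $O(k)$-point set $S_i$ whose $k$-median cost constant-factor approximates that of $B_i$ for every candidate set of $k$ centers. The buffer $B_0$ is stored verbatim: this is necessary because its points' relative weights continue to shift as further points arrive, and we cannot commit to a compressed representation until the buffer is full and about to roll over into $B_1$. At query time we form the weighted union consisting of $B_0$ together with each $S_i$ rescaled by $w_i$, and run any offline constant-factor $k$-median solver on this union. The space bound $O(k \log(h\Delta) + h)$ follows immediately, and the time bound $O(n k \log(h\Delta))$ follows by charging each point the $O(k)$ per-point work of OFL plus an amortized $O(k \log(h\Delta))$ charge for block shifting and re-coreseting each time the buffer overflows every $h$ steps.

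The main obstacle I anticipate is the composition step: arguing that summing scaled constant-factor approximations across $\Theta(\log(h\Delta))$ blocks still yields an overall $O(1)$-approximation to the true weighted cost. This requires the per-block OFL output to be a genuine coreset, valid simultaneously for every candidate set $C$ of $k$ centers, so that $\sum_i w_i \cdot \cost(C, S_i) + \sum_{p \in B_0} w(p)\, d(p,C)$ tracks the true weighted cost up to an $O(1)$ factor uniformly in $C$; this should follow from the standard analysis of OFL with a properly scaled facility cost. A secondary subtlety is that the intra-block approximation of replacing true weights by $w_i$ loses a factor of $2$, which should be folded into the final constant without compounding across blocks since the errors enter additively at the cost level rather than multiplicatively across blocks. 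The extensions to $k$-means and $M$-estimators follow by swapping OFL for the corresponding online constant-factor primitive for those objectives.
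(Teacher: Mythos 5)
Your plan takes a different route from the paper, which never discards any data: the paper runs a Charikar--O'Callaghan/Braverman-style guess-doubling scheme, feeding the whole stream through \ofl\ with facility cost tied to a lower-bound guess $L$ on $\opt$, re-clustering the accumulated facilities down to $k$ points at each phase change, and proving as its key new ingredient (Lemma~\ref{lem:2}) that after $h\log\Delta$ points plus $k+h$ further \emph{distinct} points the optimal cost must have doubled, which caps the number of phases and hence the space at $O(k\log(h\Delta)+h)$. Your window decomposition is instead the polynomial-decay block strategy adapted to exponential decay, made sublinear by truncating old blocks --- and that truncation step is where the argument breaks.

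The truncation rests on the lower bound $\opt=\Omega(w_0\cdot d_{\min})$, which is false in general. Take $k=2$, let all recent points sit at a single location $A$, and let two old points sit at nearby locations $B$ and $C$ far from $A$. Then $\opt\le w_C\cdot d(B,C)$ (centers at $A$ and $B$), which can be smaller than $w_0 d_{\min}$ by an arbitrarily large factor since $w_C\ll w_0$; your algorithm, having discarded $B$ and $C$, sees only copies of $A$, places its second center arbitrarily, and pays up to $(w_B+w_C)\,d(A,B)$, giving approximation ratio as large as $\Delta$. The underlying problem is that when the recent points occupy at most $k$ distinct locations, essentially all of the cost lives in the tail you threw away, and the spare centers must be positioned using information you no longer have; even when recent points do force a positive cost, the paying point may have weight as small as $w_0/(h\Delta)$, so the $O(w_0 d_{\min})$ additive error from the discarded tail is not absorbed. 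This is exactly the obstruction the paper cites for abandoning block decompositions in the exponential model, and the role your truncation tries to play is played there by the distinctness argument of Lemma~\ref{lem:2}, for which your scheme has no analogue. Two secondary gaps: ``an appropriately tuned facility cost'' for per-block \ofl\ presupposes a guess of the block's optimal cost (this is what the paper's phases and the doubling of $L$ supply), and \ofl\ returns a bicriteria solution with $O(k\log(\cdot))$ facilities rather than an $O(k)$-point summary valid for every candidate center set, so both your space bound and the composition step you flag as the main obstacle would additionally require the re-clustering machinery (\ram) that the paper invokes at each phase change.
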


\paragraph{Approach.}
Although our previous framework will work for other decay models, the algorithm may use prohibitively large space. 
The intuition behind the polynomial decay approach is that a separate coreset is maintained for each set of points that roughly have the same weight. 
In other words, the previous framework maintains a separate coreset each time the weight of the points decrease by some constant amount, so that if $R$ is the ratio between the largest weight and the smallest weight, then the total number of coresets stored by the algorithm is roughly $\log R$. 
In the polynomial decay model, the number of stored coresets is $\bigO{\log n}$, but in the exponential decay model, the number of stored coresets would be $\bigO{n}$, which would no longer be sublinear in the size of the input. 
Hence, we require a new approach for the exponential decay model.

Instead, we use the online facility location (\ofl) \ algorithm of Meyerson \cite{meyerson2001online} as a subroutine to solve $k$-median clustering in the exponential decay model. 
In the online facility location problem, we are given a metric space along with a facility cost for each point/location that appears in the data stream. 
The objective is to choose a (small) number of facility locations to minimize the total facility cost plus the service cost, where the service cost of a point is its distance to the closest facility. For more details, please see Section \ref{sec:exp}.

Our algorithm for the exponential time decay model proceeds on the data stream, working in phases. 
Each phase corresponds to an increasing ``guess'' for the value of the \emph{cost} of the optimal clustering. 
Using this guess, each phase queries the corresponding instance of \ofl. 
If the guess is \emph{correct}, then the subroutine selects a bounded number of facilities. 
On the other hand, if either the cost or the number of selected facilities surpasses a certain quantity, then the guess for the optimal cost must be incorrect, and the algorithm triggers a phase change.
Upon a phase change, our algorithm uses an offline $k$-median clustering algorithm to cluster the facility set and produces exactly $k$ points.  It then runs a new instance of \ofl \ with a larger guess, and continues processing the data stream.

However, there is a slight subtlety in this analysis. 
The number of points stored by \ofl \ is dependent on the weights of the point. 
In an exponential decay function, the ratio between the largest weight and smallest weight of points in the data set may be exponentially large. 
Thus to avoid \ofl \ from keeping more than a logarithmic number of points, we force \ofl \ to terminate after seeing $\log(h\Delta)$ points during a phase. 
Furthermore, we store points verbatim until we see $k+h$ \emph{distinct} points, upon whence we will trigger a phase change. 
We show that forcing this phase change does indeed correspond with an increase in the guess of the value for the optimal cost.

\section{Related Work} \label{sec:related}
The first insertion-only streaming algorithm for the $k$-median clustering problem was presented in 2000 by Guha, Mishra, Motwani, and O'Callaghan \cite{guha2000clustering}. 
Their algorithm uses $\bigO{n^\epsilon}$ space for a $2^{\bigO{1/\epsilon}}$ approximation, for some $ 0<\epsilon<1$. 
Subsequently, Charikar \etal\, \cite{charikar2003better} present an $\bigO{1}$-approximation algorithm for $k$-means clustering that uses $\bigO{k\log^2 n}$ space. 
Their algorithm uses a number of phases, each corresponding to a different guess for the value of the cost of optimal solution. 
The guesses are then used in the online facility location (\ofl) algorithm of \cite{meyerson2001online}, which provides a set of centers whose number and cost allows the algorithm to reject or accept the guess. 
This technique is now one of the standard approaches for handling $k$-service problems. 
Braverman \etal\,\cite{braverman2011streaming} improve the space usage of this technique to $\bigO{k\log n}$.
\cite{BravermanLLM15} and \cite{BravermanLLM16} develop algorithms for $k$-means clustering on sliding windows, in which expired data should not be included in determining the cost of a solution.

Another line of approach for $k$-service problems is the construction of coresets, in particular when the data points lie in the Euclidean space. 
Har-Peled and Mazumdar \cite{Har-PeledM04} give an insertion-only streaming algorithm for $k$-medians and $k$-means that provides a $(1+\epsilon)$-approximation, using space $\bigO{k\epsilon^{-d}\log^{2d+2}n}$, where $d$ is the dimension of the space. 
Similarly, Chen \cite{chen2009coresets} introduced an algorithm using $\bigO{k^2d\epsilon^{-2}\log^8 n}$ space, with the same approximation guarantees. 

Cohen and Strauss \cite{CohenS03} study problems in time-decaying data streams in 2003. 
There are a number of results \cite{KopelowitzP05, CormodeTX07, CormodeKT08, CormodeTX09} in this line of work, but the most prominent time-decay model is the sliding window model. 
Datar \etal\,\cite{DatarGIM02} introduced the exponential histogram as a framework in the sliding window for estimating statistics such as count, sum of positive integers, average, and $\ell_p$ norms. 
This initiated an active line of research, including improvements to count and sum \cite{GibbonsT02}, frequent itemsets \cite{ChiWYM06, BravermanGLWZ18}, frequency counts and quantiles \cite{ArasuM04, LeeT06}, rarity and similarity \cite{DatarM02}, variance and $k$-medians \cite{BabcockDMO03} and other geometric and numerical linear algebra problems \cite{FeigenbaumKZ04, ChanS06,braverman2018numerical}.

\section{Preliminaries} \label{sec:prelim}
Let $\mathcal{X}$ be the set of possible points in a space with metric $d$.
A weighted set is a pair $(P,w)$ with a set $P \subset \mathcal{X}$ and a weight function $w : P \rightarrow [0,\infty)$.
A query space is a tuple $(P,w,f,Q)$ that combines a weighted set with a set $Q$ of possible queries and a function $f : \mathcal{X} \times Q \rightarrow [0,\infty)$.
A query space induces a function 
\[
\bar{f}(P,w,q) = \sum_{p \in P} w(p) f(p,q).
\]
We now instantiate the above with some simple examples.
\begin{example}[$k$-means]
Let $Q$ be all sets of $k$ points in $\mathbb{R}^d$, and for $C \in Q$ define $f(p,C) = \min_{c \in C} d^2(p,c)$. 
The $k$-means cost of $(P,w)$ to $C$ is
\[
\sum_{p \in P} w(p) \min_{c \in C} d^2(p,c).
\]
\end{example}

\begin{example}[$k$-median]
Let $Q$ be all sets of $k$ points in $\mathbb{R}^d$, and for $C \in Q$ define $f(p,C) = \min_{c \in C} d(p,c)$. 
The $k$-median cost of $(P,w)$ to $C$ is
\[
\sum_{p \in P} w(p) \min_{c \in C} d(p,c).
\]
\end{example}

\noindent
Note that both $k$-median and $k$-means are captured in $\bar{f}(P,w,C)$. We now define an $\epsilon$-coreset.

\begin{definition}[$\epsilon$-coreset] A $\epsilon$-coreset for the query space $(P,w,f,Q)$ is a tuple $(Z,u)$, where $Z \subseteq \mathcal{X}$ is a set of points and $u:Z\rightarrow[0,\infty)$ are their corresponding weights, such that for every $q \in Q$
\[(1-\epsilon) \bar f(P,w,q) \leq \bar f(Z,u,q) \leq (1+\epsilon) \bar f(P,w,q).\]
\end{definition}

An important property of coresets is that they are \emph{closed} under  operations like union and composition. We formalize this below.
\begin{proposition}[Merge-and-reduce]
\label{prop:coresets} \cite{chen2009coresets}
Coresets satisfy the following two properties.
\begin{enumerate}
\item If $\cS_1$ and $\cS_2$ are $\epsilon$-coresets of disjoint sets $\cP_1$ and $\cP_2$ respectively, then $\cS_1 \cup \cS_2$ is an $\epsilon$-coreset of $\cP_1 \cup \cP_2$.
\item If $\cS_1$ is an $\epsilon$-coreset of $\cS_2$ and $\cS_2$ is a $\delta$-coreset of $\cS_3$, then $\cS_1$ is a $((1+\epsilon)(1+\delta) - 1)$-coreset of $\cS_3$.
\end{enumerate} 
\end{proposition}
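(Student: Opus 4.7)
The plan is to prove the two parts separately, each by directly unfolding the definition of $\bar{f}$ and the $\epsilon$-coreset inequality; both are short calculations, with the only subtlety being an algebraic check in the composition case.

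For part 1, I would start by observing that because $\mathcal{P}_1$ and $\mathcal{P}_2$ are disjoint, the induced function splits as $\bar{f}(\mathcal{P}_1 \cup \mathcal{P}_2, w, q) = \bar{f}(\mathcal{P}_1, w, q) + \bar{f}(\mathcal{P}_2, w, q)$, and likewise for any weights extending consistently to $\mathcal{S}_1 \cup \mathcal{S}_2$. Fix an arbitrary $q \in Q$. Applying the $\epsilon$-coreset guarantees of $\mathcal{S}_1$ and $\mathcal{S}_2$ termwise yields
\[
(1-\epsilon)\bar{f}(\mathcal{P}_i, w, q) \le \bar{f}(\mathcal{S}_i, u_i, q) \le (1+\epsilon)\bar{f}(\mathcal{P}_i, w, q)
\]
for $i = 1, 2$. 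Summing the two inequalities and using the additivity above gives the desired two-sided bound for $\mathcal{S}_1 \cup \mathcal{S}_2$, and since $q$ was arbitrary this establishes that $\mathcal{S}_1 \cup \mathcal{S}_2$ is an $\epsilon$-coreset of $\mathcal{P}_1 \cup \mathcal{P}_2$.

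For part 2, I would again fix $q \in Q$ and chain the two coreset inequalities. From the hypotheses,
\[
(1-\epsilon)\bar{f}(\mathcal{S}_2, q) \le \bar{f}(\mathcal{S}_1, q) \le (1+\epsilon)\bar{f}(\mathcal{S}_2, q)
\]
and
\[
(1-\delta)\bar{f}(\mathcal{S}_3, q) \le \bar{f}(\mathcal{S}_2, q) \le (1+\delta)\bar{f}(\mathcal{S}_3, q).
\]
Multiplying the upper bounds gives $\bar{f}(\mathcal{S}_1, q) \le (1+\epsilon)(1+\delta)\bar{f}(\mathcal{S}_3, q)$, as required. The only real check is on the lower side: the chain gives $\bar{f}(\mathcal{S}_1, q) \ge (1-\epsilon)(1-\delta)\bar{f}(\mathcal{S}_3, q)$, and I must verify $(1-\epsilon)(1-\delta) \ge 1 - \bigl((1+\epsilon)(1+\delta) - 1\bigr)$. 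Expanding both sides, the left equals $1 - \epsilon - \delta + \epsilon\delta$ while the right equals $1 - \epsilon - \delta - \epsilon\delta$, so the inequality holds (with slack $2\epsilon\delta$).

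Neither step is really an obstacle; the main thing to be careful about is the lower-bound slack in the composition, which is where the asymmetric factor $(1+\epsilon)(1+\delta) - 1$ rather than something smaller comes from, and the disjointness assumption in part 1, which is what allows $\bar{f}$ to split additively over $\mathcal{P}_1 \cup \mathcal{P}_2$ without double counting any point's weight.
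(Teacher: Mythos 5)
Your proof is correct: the paper states this proposition without proof, citing \cite{chen2009coresets}, and your argument is the standard one -- additivity of $\bar{f}$ over the disjoint union for part 1, and chaining the two-sided bounds for part 2, where your algebraic check that $(1-\epsilon)(1-\delta) = 1-\epsilon-\delta+\epsilon\delta \geq 1-\epsilon-\delta-\epsilon\delta = 1-\bigl((1+\epsilon)(1+\delta)-1\bigr)$ (using $\bar{f} \geq 0$) is exactly the point that makes the asymmetric error parameter work. Nothing is missing.
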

We now define approximate triangle inequality, a property that allows us to extend our results obtained in metric spaces to ones with \emph{semi-distance} functions. 
In particular, this allows us to extend results for $k$-median clustering to $k$-means and $M$-estimators in exponential decay streams.
\begin{definition}
[$\lambda$-approximate triangle inequality]
\label{def:ati}
A  function $d(\cdot,\cdot)$ on a space $\cX$ satisfies the $\lambda$-approximate triangle inequality if for all $x,y,z\in \cX$,
\[d(x,z)\le\lambda(d(x,y)+d(y,z)).\]
\end{definition}

\section{Polynomial decay} \label{sec:poly}
We consider a time decay, wherein a point $p$ in the stream, which arrived at time $t$, has weight $w(p) = (T-t+1)^{-s}$ at time $T>t$, for some parameter $s > 0$. Equivalently, the $t^{\text{th}}$ most recent element has weight $t^{-s}$ for some $s > 0$.  

We present a general framework which, for given problem, takes an offline coreset construction algorithm and adapts it to polynomial decay streams. Our technique can be viewed as a generalization of merge-and-reduce technique of Bentley and Saxe \cite{bentley1980decomposable}. We also briefly discuss some applications towards that end. We start with stating our main theorem for polynomial decay streams.
\begin{theorem}
\label{thm:poly2}
Given an offline algorithm that takes a  set of $n$ points as input and constructs an $\epsilon$-coreset of $F(n,\epsilon)$ points in $\bigO{n \, T(\epsilon)}$ time, there exists a polynomial decay algorithm that maintains an $\epsilon$-coreset while storing $\bigO{\epsilon^{-1}s\log n \, F(n,\epsilon/\log n)}$ points and with update time 
\[\bigO{\epsilon^{-1}s\ \log n \ F(n,\epsilon) \, T(\epsilon/\log n)}.\]
\end{theorem}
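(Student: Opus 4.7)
The plan is to build a merge-and-reduce structure tailored to the polynomial decay profile, extending the standard Bentley-Saxe technique. At any moment, I would maintain a partition of the processed stream into consecutive blocks $B_1, \ldots, B_K$ (ordered oldest to newest), where each block $B_j$ is represented by an unweighted $\delta$-coreset $Z_j$ produced by the given offline algorithm with $\delta = \Theta(\epsilon/\log n)$, and each $B_j$ carries a representative weight $w_j$ (say, the current weight of its newest element). The invariant to maintain is that adjacent blocks have representative weights separated by a factor of at least $1+\gamma$ for $\gamma = \Theta(\epsilon)$; whenever time-decay causes two adjacent blocks' weights to drift within a factor of $1+\gamma$ of each other, I would merge them by running the offline coreset algorithm on the union of their supports.

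The bound on the number of blocks follows directly from the invariant: representative weights lie in $[n^{-s}, 1]$, so $K \leq 1 + \log_{1+\gamma}(n^s) = \bigO{\epsilon^{-1} s \log n}$. Multiplying by the per-block coreset size $F(n, \delta) = F(n, \epsilon/\log n)$ gives the claimed storage bound. For the update time, each arrival creates a new singleton block and may trigger a cascade of merges; by imposing a Bentley-Saxe-style level discipline on top of the weight-based rule, each block's merge history has depth $\bigO{\log n}$, and the amortized cost per arrival is one offline coreset call on inputs of size $F(n,\epsilon)$.

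For the error analysis, the returned object is $(Z,u)$ with $Z = \bigcup_j Z_j$ and $u(z) = w_j \cdot u_j(z)$ for $z \in Z_j$, where $u_j$ are the weights of the unweighted coreset $Z_j$. The error decomposes into two sources. First, within each $B_j$ the true weights lie within a factor $1+\gamma$ of $w_j$ (a consequence of the merge invariant), so replacing true weights by $w_j$ distorts the block-wise sum $\bar f(B_j, w, q)$ by a factor of $1 \pm \bigO{\gamma}$. Second, because $Z_j$ arises from composing at most $L = \bigO{\log n}$ levels of $\delta$-coreset constructions, Proposition~\ref{prop:coresets} implies that $Z_j$ is a $\tilde\delta$-coreset of the underlying unweighted set $B_j$ with $\tilde\delta = (1+\delta)^L - 1 = \bigO{L\delta} = \bigO{\epsilon}$. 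Composing these two errors gives a per-block multiplicative error of $1 \pm \bigO{\epsilon}$, and since the blocks are disjoint and $\bar f$ decomposes as a sum of non-negative block contributions, the final approximation satisfies $\bar f(Z, u, q) \in (1 \pm \bigO{\epsilon}) \bar f(P, w, q)$. Rescaling $\epsilon$ by a constant then yields the theorem.

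The main obstacle I expect is making the bookkeeping consistent: the weight-proximity rule cleanly controls the number of blocks, but it does not by itself bound the depth of each block's merge history (which controls the composition error). The key will be to marry the weight-proximity rule with a Bentley-Saxe-style level discipline (only merging blocks of comparable cardinality), and to argue via an amortized/potential argument that the two rules are compatible, so each block accumulates only $\bigO{\log n}$ merges over its lifetime while the total block count remains at $\bigO{\epsilon^{-1} s \log n}$. Managing this interplay cleanly is the technical crux of the proof.
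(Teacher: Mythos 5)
Your proposal follows essentially the same route as the paper: partition the stream into consecutive blocks, keep a $\Theta(\epsilon/\log n)$-coreset per block, merge blocks as decay brings their weights together, bound the block count by dividing the weight range $[n^{-s},1]$ into $(1+\Theta(\epsilon))$-separated scales (this is Lemma~\ref{lem:polydecayspace}), control the composition error over $O(\log n)$ reduce levels via Proposition~\ref{prop:coresets}, and absorb the within-block weight distortion via a $(1\pm\epsilon)$-reweighting lemma (Lemma~\ref{lem:polydecay1}). Rescaling constants then yields the theorem exactly as you describe.

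The one step you explicitly leave open --- reconciling the weight-proximity merge rule with a Bentley--Saxe level discipline so that merge depth stays $O(\log n)$ --- is a genuine gap in your write-up, and the paper closes it not with an amortized or potential argument but with an explicit schedule that makes the two rules coincide. For each $i$ it precomputes a marker $x_i$, the least integer at least $2^i$ satisfying $\left(\frac{x_i}{x_i-2^i+1}\right)^s \le \frac{1+\epsilon}{1-\epsilon}$; once a block's oldest element reaches age $x_i$, the $2^i$ consecutive elements ending there are guaranteed to have weights within a factor $\frac{1+\epsilon}{1-\epsilon}$ of one another, and the algorithm merges exactly those $2^i$ elements (a union of whole previously formed blocks) into one block. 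Block sizes are therefore always powers of two, so each point undergoes at most $\log n$ reduces, while every block automatically satisfies the weight-proximity property needed for the reweighting step: no compatibility argument is needed because $x_i \approx 2^i s/\epsilon$ is chosen precisely so that a size-$2^i$ merge becomes legal exactly when the dyadic structure calls for it. Substituting this deterministic schedule for your two interacting rules completes your proof; the remaining accounting (roughly $s/\epsilon$ blocks per size class, $\log n$ size classes) recovers the $O(\epsilon^{-1}s\log n)$ block bound you derived from the weight-separation invariant.
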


\paragraph{Notation.} We use $\bN$ to denote the set of natural numbers. We use \ramcoreset \ to denote an offline coreset construction algorithm, which given $n$ points, constructs an $\epsilon$-coreset in time $\bigO{n \, T(\epsilon)}$ and takes space $F(n,\epsilon)$. We abuse notation by using $F(n,\epsilon)$ to  also refer to the corresponding coreset.
\subsection{Algorithm} 
We start with giving a high-level intuition of the algorithm. Given a stream of points, the algorithm implicitly maintains a partition of the streams into disjoint \emph{blocks}. A block is a collection of consecutive points in the stream, and is represented by two positive integers $a$ and $b$ as $[a,b]$, where $a$ represents the position of the first point in the block and $b$ the last point, relative to the start of the stream. Let the set of blocks be denoted by $\cB$. Our algorithm stores points of a given block by maintaining a coreset for the points in that block. As the stream progresses, we merge older blocks i.e. the corresponding coresets. Informally, the merge happens when the weights of the blocks become \emph{close}.

We first define a set of integer \emph{markers} $x_i$, which for a given $i \in \bN$, depends on the decay parameter $s$ and target $\epsilon$. These markers dictate when to merge blocks as the stream progresses. For a given $i\in \bN$, we define $x_i$ to be the minimum integer greater than or equal to $2^i$ such that
\begin{align*}
    \frac{1-\epsilon}{(x_i - 2^i +1)^s} \leq \frac{1+\epsilon}{x_i^s}.
\end{align*}
Equivalently, we can write $\br{\frac{x_i}{x_i-2^i+1}}^s \leq \frac{1+\epsilon}{1-\epsilon}$. Note that each of the $2^i$ points following $x_i$ in the stream, has weight within $\frac{1+\epsilon}{1-\epsilon}$ times the weight of $x_i$. Moreover, $x_i$'s can be exactly pre-computed from the equation and we therefore assume that these are implicitly stored by the algorithm. Each new element in the stream starts as a new block. As mentioned before, the blocks are represented by two integers $[a,b]$ and the points are stored as a coreset. When a block $[a,b]$ \emph{reaches} $x_i$, then algorithm merges all of $[x_i-2^i+1,x_i]$ points into a single coreset. In the end, the algorithm outputs the \emph{weighted} union of the coresets of the blocks.

To visualize this, consider the integer line, and suppose that we have $x_i$'s marked on the positive side of the line, for example $x_1 = 2, x_2 = 4 \ldots$. The tuple indices of the blocks represent the relative position of the point in the stream, with the start being $1$ and the end point being $n$. At the start,  the stream is on the non-positive end with the first point at $0$. As the time progresses, the stream moves to the right side. Therefore, when we observe the first element, it moves to the point $1$. We then store it as a new block, represented by $[1,1]$; we also simultaneously store a coreset corresponding to it. As time progresses, a block reaches $x_i$ for some $i$ which can be formally expressed as $a+x_i \leq n$. We then merge all blocks in the range $[a,a+2^i-1]$. Note that by definition of $x_i$, we would have observed all these elements and also we will not merge partial blocks.  We present this idea in full in Algorithm \ref{alg:polydecay} and intuition in Figure~\ref{fig:merge}. We remark that when we construct coresets, we use an offline algorithm \ramcoreset \ which given a set of $n$ points $P$ and a query space $(P,w,f,q)$ produces an $\epsilon$-coreset. 

 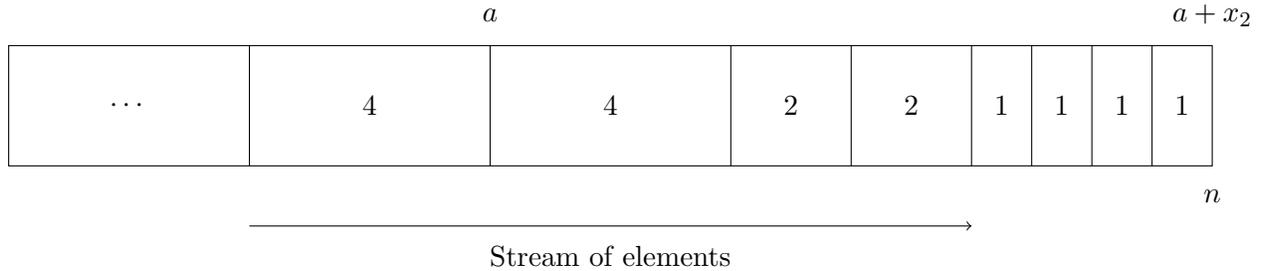
\begin{figure*}[!htb]
 \centering
 \begin{tikzpicture}[scale=0.8]
\draw (0cm,0cm) rectangle+(20cm,2cm);
 \node at (20cm,-0.5cm){$n$};
 \node at (20cm,2.5cm){$a+x_2$};
 \draw (19cm,0cm) -- (19cm, 2cm);
 \node at (19.5cm,1cm){$1$};
 \draw (18cm,0cm) -- (18cm, 2cm);
 \node at (18.5cm,1cm){$1$};
 \draw (17cm,0cm) -- (17cm, 2cm);
 \node at (17.5cm,1cm){$1$};
 \draw (16cm,0cm) -- (16cm, 2cm);
 \node at (16.5cm,1cm){$1$};
 \draw (14cm,0cm) -- (14cm, 2cm);
 \node at (15cm,1cm){$2$};

 \draw (12cm,0cm) -- (12cm, 2cm);
 \node at (13cm,1cm){$2$};
 \node at (8cm,2.5cm){$a$};
 \draw (8cm,0cm) -- (8cm, 2cm);
 \node at (10cm,1cm){$4$};
 \draw (4cm,0cm) -- (4cm, 2cm);
 \node at (6cm,1cm){$4$};
 \node at (2cm,1cm){$\cdots$};
 \draw[->] (4cm,-1cm) -- (16cm, -1cm);
 \node at (10cm, -1.5cm){Stream of elements};
 \end{tikzpicture}
 \caption{The algorithm merges blocks in each interval $[a,a+2^i-i]$ for $a\le n-x_i$}
 \label{fig:merge}
 \end{figure*}

\begin{algorithm}[!htb]
\caption{$\epsilon$-coreset for polynomial decaying streams}
\label{alg:polydecay}
\begin{algorithmic}[1]
\REQUIRE Stream $P$, polynomial decay function $w(t) = \frac{1}{t^s}$, for some $s>0$, an offline coreset construction algorithm \ramcoreset
\ENSURE $(1+\epsilon)$ coreset.
\STATE Initialize $\cB = \emptyset$ 
\FOR{each element $p_n$ of the stream}
    \STATE Insert $[n,n]$ into $\cB$ as a new block and construct a coreset
    \FOR{each block $[a,b] \in \cB$}
        \IF{$a + x_i < n$ for some $i$}
            \STATE \emph{Merge} the blocks in $[a,a+2^i-1]$ and \emph{reduce} to get an $\frac{\epsilon}{3\log n}$-coreset
        \ENDIF
    \ENDFOR
\ENDFOR
\FOR{each block $[a,b] \in \cB$}
    \STATE Give the block weight $u(a,b) = \frac{1}{2}\br{\frac{1-\epsilon}{a^s} + \frac{1+\epsilon}{b^s}}$
\ENDFOR
\end{algorithmic}
\end{algorithm}

\subsection{Analysis}
We first show that a weighted combination of blocks gives us an $\epsilon$-coreset. For a block $[a,b]$, let the weight of the block be denoted as $u(a,b)$. We set $u(a,b) = \bar{u}$ where $\bar{u}$ satisfies
\begin{align*}
    \frac{1-\epsilon}{a^s} \leq \bar{u} \leq \frac{1+\epsilon}{b^s}.
\end{align*}
The following lemma shows that any such $\bar{u}$ produces a $3\epsilon$-coreset.

\begin{lemma}
\label{lem:polydecay1}
Let $(Z,u)$ be an $\epsilon$-coreset for $(P,w,f,Q)$. Let $\bar{u}:Z\rightarrow [0,\infty)$ be such that $(1-\epsilon)u(z) \leq \bar{u}(z) \leq (1+\epsilon)u(z)$ for every $z \in Z$, then $(Z,\bar{u})$ is a $3\epsilon$-coreset for $(P,w,f,Q)$.
\begin{proof}
Since $(Z,u)$ is an $\epsilon$-coreset for $(P,w,f,Q)$, therefore for every $q \in Q$,
\begin{align*}
    & (1-\epsilon)\bar{f}(P,w,q) \leq \bar{f}(Z,u,q) \leq (1+\epsilon)\bar{f}(P,w,q) \\
    \iff & (1-\epsilon)\sum_{p \in P}w(p)f(p,q) \leq \sum_{z \in Z}u(z)f(z,q) \leq (1+\epsilon)\sum_{p \in P}w(p)f(p,q) \\
    \iff & (1-\epsilon)^2 \sum_{p \in P}w(p)f(p,q) \leq \sum_{z \in Z}\bar{u}(z)f(z,q) \leq (1+\epsilon)^2\sum_{p \in P}w(p)f(p,q).
\end{align*}
Note that for $\epsilon < 1$, we have $(1-2\epsilon)\bar{f}(P,w,q) \leq (1-\epsilon)^2\bar{f}(P,w,q) \leq \bar{f}(Z,\bar{u},q) \leq (1+\epsilon)^2 \bar{f}(P,w,q)\leq (1+3\epsilon)f(P,w,q)$. Therefore $(Z, \bar{u})$ is a $3 \epsilon$-coreset for $(P,w,f,Q)$.
\end{proof}
\end{lemma}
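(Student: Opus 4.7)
The plan is to perform a pointwise weight-perturbation argument and then compose its multiplicative error with the given coreset's multiplicative error. The only conceptual content is the observation that replacing $u(z)$ by $\bar u(z)$ in the weighted sum $\bar f(Z,\cdot,q)$ perturbs each summand by at most a $(1\pm\epsilon)$ factor, which we can then chain with the $(1\pm\epsilon)$ guarantee coming from $(Z,u)$ being an $\epsilon$-coreset. The arithmetic bound $(1+\epsilon)^2 \le 1+3\epsilon$ (valid for $\epsilon \in [0,1]$) will do the final conversion from a squared error to a linear $3\epsilon$ error.

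First, I would fix an arbitrary query $q \in Q$ and expand $\bar f(Z,\bar u,q) = \sum_{z \in Z} \bar u(z)\, f(z,q)$. Since $f$ is nonneg\-ative-valued, multiplying the hypothesis $(1-\epsilon)u(z) \le \bar u(z) \le (1+\epsilon)u(z)$ by $f(z,q) \ge 0$ and summing over $z \in Z$ preserves the inequalities and yields
\[
(1-\epsilon)\,\bar f(Z,u,q) \;\le\; \bar f(Z,\bar u,q) \;\le\; (1+\epsilon)\,\bar f(Z,u,q).
\]

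Second, I would invoke the hypothesis that $(Z,u)$ is an $\epsilon$-coreset, giving $(1-\epsilon)\bar f(P,w,q) \le \bar f(Z,u,q) \le (1+\epsilon)\bar f(P,w,q)$, and chain this with the previous display to obtain
\[
(1-\epsilon)^2\,\bar f(P,w,q) \;\le\; \bar f(Z,\bar u,q) \;\le\; (1+\epsilon)^2\,\bar f(P,w,q).
\]
Finally, for $\epsilon \in [0,1]$, the elementary bounds $(1-\epsilon)^2 = 1 - 2\epsilon + \epsilon^2 \ge 1 - 3\epsilon$ and $(1+\epsilon)^2 = 1 + 2\epsilon + \epsilon^2 \le 1 + 3\epsilon$ (the latter using $\epsilon^2 \le \epsilon$) let me collapse the squared factors into $(1\pm 3\epsilon)$. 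Since $q$ was arbitrary, this shows that $(Z,\bar u)$ is a $3\epsilon$-coreset.

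There is no real obstacle here — the proof is a routine pointwise inequality followed by composition. The only minor item to flag is the implicit assumption $\epsilon \le 1$ needed to convert $(1+\epsilon)^2$ into $1+3\epsilon$; this is harmless since a coreset approximation $\epsilon \ge 1$ would make the guarantee vacuous anyway. If one wanted a sharper statement, one could replace $3\epsilon$ by $2\epsilon+\epsilon^2$, but the $3\epsilon$ form is more convenient for later bookkeeping with $\epsilon/\log n$ in the merge-and-reduce chain of Algorithm~\ref{alg:polydecay}.
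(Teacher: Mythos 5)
Your proof is correct and follows essentially the same route as the paper's: perturb the weights pointwise to pick up a $(1\pm\epsilon)$ factor, compose with the $\epsilon$-coreset guarantee to get $(1\pm\epsilon)^2$, and relax to $(1\pm 3\epsilon)$ using $\epsilon \le 1$. Your version is marginally more careful in noting that the nonnegativity of $f$ is what lets the pointwise weight bounds be summed, but the argument is the same.
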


Having assigned weights to the blocks, we can take the union to get the coreset of $\cB$. For simplicity, we choose $u(a,b) = \frac{1}{2}\br{\frac{1-\epsilon}{a^s} + \frac{1+\epsilon}{b^s}}$ in Algorithm \ref{alg:polydecay}. 
We now present a lemma that bounds the number of blocks maintained by the algorithm.

\begin{lemma}
\label{lem:polydecayspace}
Given a polynomial decay stream of $n$ points as input to Algorithm \ref{alg:polydecay}, the number of blocks produced is $\bigO{\epsilon^{-1}s\log n}$.
\end{lemma}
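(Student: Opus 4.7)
The plan is to count blocks by size: for each $i \geq 0$, I will bound the number of blocks of size exactly $2^i$ present at time $n$, and then sum over $i$. The intuition is that blocks of size $2^i$ occupy an ``annulus'' of stream positions that are far enough from the current time $n$ to have been merged up to size $2^i$, but not so far as to have been merged further into a block of size $2^{i+1}$.

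First, I would bound the markers $x_i$. Setting $r = \left(\frac{1+\epsilon}{1-\epsilon}\right)^{1/s}$ and taking $s$-th roots of the defining inequality, the minimality of $x_i$ yields $x_i \leq \lceil (2^i - 1)/(r-1) \rceil + 2^i - 1$. Using $\ln\frac{1+\epsilon}{1-\epsilon} \geq 2\epsilon$, I would derive $r - 1 \geq \frac{2\epsilon}{s}$, giving $x_i \leq \frac{s(2^i - 1)}{2\epsilon} + 2^i = \bigO{s \cdot 2^i / \epsilon}$.

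Next I would prove the structural claim: at time $n$, any block $[a, a + 2^i - 1]$ of size exactly $2^i$ must have start $a$ lying in $(n - x_{i+1},\, n - x_i]$. The upper bound $a \leq n - x_i$ holds because a block of size $2^i$ is produced only once the merge condition $a + x_i < n$ triggers for its start position. The lower bound $a > n - x_{i+1}$ holds because otherwise the algorithm would have merged this block together with $2^i$ adjacent positions into a single block of size $2^{i+1}$. Combined with an inductive invariant that size-$2^i$ blocks appear on a grid of spacing $2^i$, this yields at most $\lceil (x_{i+1} - x_i)/2^i \rceil = \bigO{s/\epsilon}$ blocks of each size.

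Finally, the largest block in the stream has size $2^I$ where $x_I \leq n$. By the bound on $x_i$, this forces $I = \bigO{\log(n \epsilon / s)} = \bigO{\log n}$ for constant $s$ and $\epsilon$. Summing $\bigO{s/\epsilon}$ blocks across $\bigO{\log n}$ values of $i$ gives the claimed $\bigO{\epsilon^{-1} s \log n}$ bound. The main obstacle I anticipate is rigorously establishing the structural claim and the alignment invariant, since the algorithm's merge rule can simultaneously combine many smaller blocks into a single larger one; handling the boundary case of singletons near the tail of the stream (where $a + x_1 \geq n$) also requires a little care.
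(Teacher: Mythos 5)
Your proposal is correct in its overall logic, but it takes a genuinely different route from the paper. The paper's proof is a global weight-ratio charging argument: by the minimality of the $x_i$, any two adjacent surviving blocks must together span a weight ratio of at least $(1+\epsilon)/(1-\epsilon)$ (else they would already have been merged); since the total weight ratio across the stream is $n^s$, the number of blocks $B$ satisfies $\br{\frac{1+\epsilon}{1-\epsilon}}^{\Omega(B)}\le n^s$, and the bound follows from $\log\frac{1+\epsilon}{1-\epsilon}=\Omega(\epsilon)$. You instead take a per-size census: you bound the markers explicitly by $x_i=\bigO{s\,2^i/\epsilon}$, argue that every size-$2^i$ block has its start confined to the window $(n-x_{i+1},\,n-x_i]$, conclude there are $\bigO{s/\epsilon}$ blocks of each size, and sum over the $\bigO{\log n}$ sizes. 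Each approach buys something: the paper's is shorter and never needs an explicit formula for $x_i$ (only its minimality), while yours is more informative --- it describes the actual distribution of block sizes and, as a byproduct, re-derives the ``at most $\log n$ reduce levels'' fact that the paper invokes separately in the proof of Theorem \ref{thm:poly2}. Two remarks on your anticipated obstacles. First, the grid-alignment invariant you worry about can be sidestepped entirely: the blocks are pairwise disjoint intervals of length $2^i$, so if their left endpoints all lie in an interval of length $x_{i+1}-x_i$ then there are at most $(x_{i+1}-x_i)/2^i+1$ of them by disjointness alone. Second, the structural claim that an unmerged size-$2^i$ block satisfies $a>n-x_{i+1}$ does require knowing that the merge to size $2^{i+1}$ would actually have fired on a complete pair of size-$2^i$ blocks (equivalently, that $x_{i+1}\ge x_i+2^i$, which your bound on $x_i$ delivers); note the paper's own proof leans on the analogous unproven assertion that partial blocks are never merged, so this gap is shared rather than introduced by your argument.
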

\begin{proof}
Consider any two adjacent blocks. 
By the definition of the $x_i$'s, the ratio between the weights of the oldest and youngest elements is at least $(1+\epsilon)/(1-\epsilon)$. 
In the full stream, the oldest element has weight $1/n^s$ and the youngest element has weight $1$. 
Let $B$ be the number of blocks so that $\br{\frac{1+\epsilon}{1-\epsilon}}^{\floor{B}} \leq n^s$. 
Solving for $B$, we get $B \leq  \frac{ s \log n}{\log\br{(1+\epsilon)/(1-\epsilon)}}$. 
We will now lower bound the denominator using the numerical inequality $\text{ln}(1+x) \geq \frac{2x}{2+x}$ for $x>0$; equivalently $\log(1+x) \geq c \cdot \frac{2x}{2+x}$ for $x>0$ and $c=\Theta(1)$. We get $\log\br{\frac{1+\epsilon}{1-\epsilon}} = \log\br{1 + \frac{2\epsilon}{1-\epsilon}} \geq 2c\epsilon$, and therefore we have $B = \bigO{\epsilon^{-1}s\log n}$.
\end{proof}

We now give the proof of the main theorem for the polynomial decay model.

\begin{proof}[Proof of Theorem \ref{thm:poly2}]
From Proposition \ref{prop:coresets}, we get that when we merge disjoint blocks, we do not sacrifice the coreset approximation parameter $\epsilon$. However, when we reduce, for instance two $\epsilon$-corsets, we get a $2\epsilon$-coreset. For $n$ points observed in the stream, note that there would be at most $\log n$ reduces. This follows from the fact that the size of successive blocks increase exponentially. Therefore using an offline $\epsilon'$-coreset construction algorithm \ramcoreset \ with $\epsilon' = \epsilon/3\log n$, we get that merging and reducing the blocks produces an $\epsilon/3$-coreset (by Proposition \ref{prop:coresets}). Finally, from Lemma \ref{lem:polydecay1}, we get that taking a union of these blocks weighted by $u(a,b) = \frac{1}{2}\br{\frac{1-\epsilon}{a^s} + \frac{1+\epsilon}{b^s}}$ gives us an $\epsilon$-coreset.

For the space bound, we have from Lemma \ref{lem:polydecayspace} that the number of blocks is $\bigO{\epsilon^{-1}s\log n}$. Since we maintain an $\epsilon/\log n$ coreset for each block, we get that the offline coreset construction algorithm takes space $F(n,\epsilon/\log n)$. Therefore, we get that the space complexity is $\bigO{\epsilon^{-1}s\log n \, F(n,\epsilon/\log n)}$. For update time, note that for $n$ points, we have $\bigO{\epsilon^{-1}s\log n}$ blocks and we use an $\br{\epsilon/\log n}$-coreset algorithm which takes time $\bigO{F(n,\epsilon) \ T(\epsilon/\log n)}$ per block. We therefore get a total time of $\bigO{\epsilon^{-1}s\ \log n \ F(n,\epsilon) \, T(\epsilon/\log n)}$
\end{proof}

\paragraph{Applications.} Coresets have been designed for a wide variety of geometric, numerical linear algebra and learning problems. Some examples include $k$-median and $k$-means clustering \cite{chen2009coresets}, low rank approximation \cite{sarlos2006improved}, $\ell_p$ regression \cite{clarkson2009numerical},  projective clustering \cite{deshpande2006matrix}, subspace approximation \cite{feldman2010coresets}, kernel methods \cite{zheng2017coresets}, Bayesian inference \cite{huggins2016coresets} etc. We instantiate our framework with a few of these problems, and present the results in Table \ref{table:coresets}.

\section{Exponential decay} \label{sec:exp}
We now discuss another model of time decay in which the weights of previous points decay exponentially with time. 
Analogous to our polynomial decay model, a point that first appeared in the stream at time $t \leq T $ has weight $2^{\frac{T-t+1}{h}}$ at time $T$, where the parameter $h > 0$ is the half-life of the decay function. 
We however consider a different viewpoint to simplify the analysis; we maintain that the weight of a point observed at time $t$ is \emph{fixed} to be $2^{t/h}$ where $h>0$ is the half-life parameter. 
These are equivalent since the ratio of weights between successive points is the same in both the models. 

\paragraph{Online Facility Location.} We first discuss the problem of Online Facility Location (\ofl) as our algorithm uses it as a sub-routine. The problem of facility location, given a set of points $P \subseteq \cX$, called \emph{demands}, a distance function $\dist(\cdot,\cdot)$ and fixed cost $f>0$, conventionally called the \text{facility cost}, asks to find a set of points $\cC$ that minimizes the objective 
$$\minu{\cC \subseteq \cX}{\sum_{p \in P} \minu{c \in \cC}{\ \dist(p,c)} + \abs{\cC}f}.$$

Informally, it seeks a set of points such that the \emph{cumulative} cost of serving the demands (known as \emph{service cost}), which is $\dist(p,c)$  and opening new facilities $f$, is minimized. Online Facility Location is the variant of the above problem in the streaming setting, wherein the facility assignments and service costs incurred are irrevocable. 
That is to say, once a point is assigned to a facility, it cannot be reassigned to a different facility at a later point in time, even if the newer facility is closer. 
A simple and popular algorithm to this problem is by Meyerson \cite{meyerson2001online}, wherein upon receiving a point, it calculates its distance to the nearest facility and flips a coin with bias equal to the distance divided by facility cost. 
If the outcome is heads (or $1$), it opens a new facility, otherwise the nearest point serves this demand and it incurs a service cost, equal to the distance. 
From here on, we abuse notation and use \ofl to refer to the algorithm of Meyerson \cite{meyerson2001online}.

\subsection{Algorithm}
Our algorithm for exponential decaying streams is a variant of the popular $k$-median clustering algorithm \cite{braverman2011streaming,charikar2003better}, which uses \ofl \ as a sub-routine. We first briefly discuss the algorithm of  \cite{braverman2011streaming} and then elucidate on how we adapt this to exponential decay streams. The algorithm operates in \emph{phases}, where in each phase it maintains a \emph{guess}, denoted by $L$, to the lower bound on optimal cost. It then uses this guess to instantiate the \ofl \ algorithm of \cite{meyerson2001online} on a set of points in the stream. If the service cost of \ofl \ grows high or the number of facilities grows large, it infers that the guess is too low and triggers a \emph{phase change}. It then increases the guess by a factor of $\beta$ (to be set appropriately) and the facilities are put back at the start of the stream and another round of \ofl \ is run. 

\paragraph{Notation.} We first define and explain some key quantities. The \emph{aspect ratio} of a set is defined as the ratio between the largest distance and the smallest non-zero distance between any two points in the set. We use $\Delta$ to denote the aspect ratio of the stream $P$. For simplicity of presentation, we assume that the minimum non-zero distance between two points is at least $1$. We define $W$ as the total weight of the first $\np$ points in the stream divided by the minimum weight. Suppose the stream starts at $t=z$, then for any $h=\Omega(1)$,
 
$$W = \frac{1}{2^{z/h}}\sum_{t=z}^{h \log (\Delta+1)} 2^{t/h}  =  \frac{\Delta}{2^{1/h} - 1} = \Theta(h \Delta).$$

For a set $P \subseteq (\cX,\dist)$, we use $\opt_k(P)$ to denote the optimal $k$-median clustering cost for the set. For two sets $P$ and $S$, we use $\cost(P,S)$ to denote the cost of clustering $P$ with $S$ as medians. Whenever we use $\opt$, it corresponds to the optimal cost of $k$-median clustering of the stream seen till the point in context. We use \ram \ to denote an offline constant $\cram$-approximate $k$-median clustering algorithm in the random access model (RAM). Given a set of points $P$ and a positive integer $k$, \ram \ outputs $(\cC,\lambda)$, where $\cC$ is a set of $k$ points and $\lambda = \cost(P,\cC) \leq \cram  \cdot \opt_k(P)$.
\paragraph{Our Algorithm.} Our algorithm, inspired from  \cite{charikar2003better,braverman2011streaming}, works in phases. We however have important differences. Each of our phases are again sub-divided into two \emph{sub-phases}. In the first sub-phase we execute \ofl \ same as \cite{charikar2003better,braverman2011streaming} and after each point we check if the cost or the number of facilities is too large. If this is indeed the case, we trigger a phase change. However, if we read $\np$ points in a phase, then we move on to the second sub-phase of the algorithm. Here we simply count points and store them verbatim. Upon reading $k + \npv$ points, we trigger a phase change.  The intuition for this sub-phase is that a phase change is triggered when \opt \ increases by a factor of $\beta$.  After $\np$ points, subsequent points are so heavy relative to points of the previous phase that any service cost will be large enough to ensure \opt \ has increased. Therefore, we restrict the algorithm to read at most $\np + k+ \npv$ points in a single phase.  When we start a new phase, we cluster the existing facility set to extract exactly $k$ points using an off-the-shelf constant approximate \ram \ algorithm and continue processing the stream. We present the above idea in full in Algorithm~\ref{alg:expdecay}. We now state our main theorem for exponential decay streams.

\begin{theorem}
There exists a streaming algorithm that given a stream $P$ of exponential decaying points with aspect ratio $\Delta$ and half-life $h$, produces an $\bigO{1}$-approximate solution to k-median clustering. The algorithm runs in $\bigO{nk\log (h \Delta)}$ time and uses $\bigO{k \log (h\Delta) + h}$ space.
\end{theorem}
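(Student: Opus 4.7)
The plan is to argue phase-by-phase that Algorithm~\ref{alg:expdecay} maintains an $\bigO{1}$-approximation to the exponentially-weighted $k$-median cost throughout the stream, adapting the phase-based \ofl \ framework of \cite{charikar2003better, braverman2011streaming} to the decay setting. I would first handle sub-phase 1 by invoking the standard Meyerson guarantee: with a correct guess $L = \Theta(\opt)$ and facility cost $f = \Theta(L/k)$, \ofl \ on $N$ weighted points opens $\bigO{k \log N}$ facilities with expected service cost $\bigO{L}$. Since sub-phase 1 is truncated at $\np$ points, the facility count is $\bigO{k\log (h\Delta)}$ and, crucially, the weight ratio between the heaviest and lightest point in the sub-phase is at most $\Delta$, so a standard rescaling-of-weights argument lets Meyerson's analysis go through. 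Triggers on the running service cost or the facility count exceeding the thresholds implied by $L = \Theta(\opt)$ correctly detect $L < \opt$, in which case multiplying $L$ by $\beta$ preserves the invariant $L \leq \opt$ at the next phase.

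The crux of the new analysis is sub-phase 2, which exists precisely because after $\np$ points the individual weights outgrow the dynamic range that Meyerson's analysis can tolerate. Rather than extending \ofl, the algorithm stores the next $k + \npv$ points verbatim and triggers a phase change when that count is reached; I would show that this trigger corresponds to a genuine $\Omega(\beta)$-factor jump in \opt. Specifically, any $k$-median clustering of the stored $k + \npv$ points leaves at least $\npv$ of them uncovered, each contributing distance $\geq 1$ to its nearest center by the aspect-ratio normalization; the total weight of the $\npv$ most recent points in this set is $\Omega\br{\npv \cdot 2^{T_0/h}}$ for $T_0$ the current time, which exceeds $\beta L$ since sub-phase 2 is only reached after the weights have already grown by a factor of $\Delta$ past the start of the phase. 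Hence each phase change indeed corresponds to an $\Omega(1)$-multiplicative jump in \opt, so $L$ remains a constant-factor estimate of \opt \ throughout the stream.

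To combine across phases, I would show that the \ofl \ output of the current phase together with the $k$-point summary produced at the last phase change by the offline \ram \ algorithm yields an $\bigO{1}$-approximate $k$-median solution to the full weighted stream seen so far. The key exponential-decay observation preventing approximation-factor accumulation is that once a phase has advanced past its first $\np$ points, the carry-over summary represents total weight at most a $1/\Delta$ fraction of the current phase's weight, so the constant-factor loss from clustering down to exactly $k$ centers contributes only a lower-order term. Space is $\bigO{k \log (h\Delta)}$ for the current \ofl \ facility set, $\bigO{k}$ for the carried-over summary, and $\bigO{k + \npv}$ for sub-phase 2 verbatim storage, totaling $\bigO{k\log (h\Delta) + h}$; the running-time bound follows from spending $\bigO{k\log(h\Delta)}$ per update to locate the nearest facility. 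The main obstacle is the sub-phase 2 bookkeeping, i.e., rigorously showing that the $k + \npv$ trigger corresponds to an actual multiplicative jump in \opt \ rather than being an arbitrary algorithmic choice; this argument requires simultaneously exploiting the aspect-ratio lower bound on distances and the exponential growth of weights, neither of which alone would suffice.
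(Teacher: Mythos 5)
Your treatment of the two genuinely new ingredients matches the paper's. For sub-phase 1 you correctly reduce to Meyerson's analysis by treating the weighted points as $W=\Theta(h\Delta)$ repeated unit-weight points (this is why the phase is truncated at $\np$ points), and your sub-phase 2 argument --- at most $k$ of the $k+\npv$ stored distinct points can coincide with centers, so at least $\npv$ of them pay distance at least $1$ with weights at least $2^{(z+\np)/h}$, which by the aspect-ratio bound $\underline{\opt}\leq \Delta\cdot\frac{2^{(z+1)/h}-1}{2^{1/h}-1}$ dominates everything seen earlier in the stream and forces $\opt$ to at least double --- is exactly the paper's Lemma~\ref{lem:2}. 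Two issues, one of them substantive.

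The substantive gap is in how you combine across phases. Your stated mechanism --- that the carried-over $k$-point summary has total weight at most a $1/\Delta$ fraction of the current phase's weight --- only applies to phases that actually reach sub-phase 2. Most phase changes are triggered earlier, by the running service cost exceeding $\gamma L$ or the facility count exceeding $(\gamma-1)k(1+\log W)$, and for such phases the carried-over summary can represent essentially all of the weight seen so far; weight decay buys you nothing there. The paper's Lemma~\ref{lem:1} handles this with the standard inductive accumulation argument: if the cost attributable to all completed phases is at most $\frac{1+\cram\beta}{\beta-1}L'$ at the end of the previous phase, then after adding the at most $\gamma L'$ service cost of that phase and the $\lambda\leq \cram\gamma L'$ cost of re-clustering with \ram, the total is still at most $\frac{1+\cram\beta}{\beta-1}L$ because $L\geq\beta L'$ grows geometrically. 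This argument is indifferent to why the phase ended, which is what you need. Combined with Lemma~\ref{lem:3} ($L=\bigO{\opt_k(P)}$ with constant probability, which itself uses Lemma~\ref{lem:1} to bound $\opt_k(\cC')\leq\opt_k(P)+\cost(P,\cC')$), this yields the explicit constant $40$. You should replace your decay-based accumulation claim with this induction.

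A minor imprecision: the facility cost must be $f=\frac{L}{k(1+\log W)}$, not $\Theta(L/k)$. With $f=\Theta(L/k)$ the expected service cost paid before a facility opens in each of the $k(1+\log W)$ rings is $\Theta(L/k)$ per ring, giving total service cost $\Theta(L\log W)$ rather than $\bigO{L}$, and the cost trigger $\cost>\gamma L$ would then fire even when $L\geq\opt_k(P)$, breaking the contrapositive that a phase change certifies $L<\opt_k(P)$.
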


\begin{algorithm}[!htb]
\caption{k-median clustering in exponential decay streams}
\label{alg:expdecay}
\begin{algorithmic}[1]
	\REQUIRE $k$, stream $P$, an offline constant approximate $k$-median clustering algorithm \ram.
	\STATE $L \leftarrow 1$, $\cC \leftarrow \emptyset$ 
	\WHILE{solution not found}
       \STATE $i \leftarrow 0$, $\cost \leftarrow 0$, $f \leftarrow \dfrac{L}{k(1+\np)}$.
        \WHILE{stream not ended} 
            \STATE $p \leftarrow $ next point from stream \\
            \STATE $q \leftarrow$ closest point  to $p$ in $\cC$  \\
            \STATE $\sigma\gets\br{\min \br{\dfrac{w(p) \cdot \dist(p,q)}{f},1}}$
            \IF[do with probability $\sigma$]{probability $\sigma$}
                \STATE $\cC \leftarrow \cC \cup \bc{p}$
            \ELSE
                \STATE $\cost \leftarrow \cost + w(p) \cdot \dist(p,q)$
                \STATE $w(q) \leftarrow w(q) + w(p)$
                \ENDIF
            \STATE $i \leftarrow i+1$
            \IF[cost or number of facilities too large]{$\cost > \gamma L \text{ or } \abs{\cC} > (\gamma - 1)k(1+\log W)$}
                \STATE break and raise flag
                \COMMENT{trigger phase change}
            \ELSIF[second sub-phase]{$i \geq \np$}  
                \FOR[count points and store them verbatim]{$l = 1$ to $\npv+k$}
                    \STATE $p \leftarrow $ next point from stream \\
                    \STATE $\cC \leftarrow \cC \cup \{p\}$
                \ENDFOR
                \STATE break and raise flag
            \ENDIF
            	 \ENDWHILE
            \IF[phase change]{flag raised} 
            \STATE $(\cC,\lambda) \leftarrow \text{\ram \ }(\cC,k)$
            \COMMENT{cluster existing facilities}
             \STATE $L \leftarrow \max \br{\beta L,\frac{\lambda}{\cram \gamma}}$
                \ELSE
                \STATE Declare solution found
          \ENDIF
          \STATE $(\cC,\lambda) \leftarrow \text{\ram \ }(\cC,k)$
         \ENDWHILE 
	\ENSURE $ \cC, \cost$
	\end{algorithmic}
\end{algorithm}

\subsection{Analysis}
We first analyze the service cost and space complexity of \ofl. 
For the $t^\text{th}$ point in the stream $p_t$, the weight of $p_t$, denoted $w(p_t)$, is  $w(p_t)= 2^{t/h}$.
The following two lemmas will establish bounds on the service cost and number of facilities of \ofl.
\begin{lemma} 
\label{lem:oflcost}When \ofl \ is run on a stream of $n$ points with exponentially decaying weights, with facility cost $f = \frac{L}{k(1+\log W)}$ where $L > 0$, it produces a service cost of at most $6\opt_k(P) + 2L$ with probability at least $1/2$.
\end{lemma}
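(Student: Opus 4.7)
The plan is to adapt Meyerson's classical analysis of Online Facility Location \cite{meyerson2001online} to the weighted setting used in Algorithm \ref{alg:expdecay}, where a point $p_t$ of weight $w(p_t) = 2^{t/h}$ triggers a new facility with probability $\sigma_t = \min(w(p_t) d(p_t, q_t)/f,\,1)$ and, if not selected as a facility, contributes weighted service cost $w(p_t) d(p_t, q_t)$. First I would fix an optimal $k$-median clustering $\{C_1,\ldots,C_k\}$ of $P$ with centers $c_1^*,\ldots,c_k^*$, so that
\[
\opt_k(P) \;=\; \sum_{i=1}^{k}\sum_{p \in C_i} w(p)\,d(p, c_i^*)
\]
decomposes into a sum of cluster contributions that can be analyzed independently.

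For each cluster $C_i$ I would carry out a two-phase argument. Before any facility has been opened ``near'' $c_i^*$ (within a distance comparable to $d(p_t, c_i^*)$), the opening rule charges expected facility cost dominated by $\sum_{p \in C_i} w(p)\,d(p, c_i^*) + O(f\log W_i)$, where $W_i$ is the total weight in $C_i$ and the $\log W_i$ factor bounds the number of ``wasteful'' openings through Meyerson's potential-style argument. Once a nearby facility $q_i$ opens, the triangle inequality gives $d(p_t, q_t) \leq d(p_t, c_i^*) + d(c_i^*, q_i) \leq 2\, d(p_t, c_i^*)$ in expectation for subsequent $p_t \in C_i$, so the weighted service cost of each such point is at most $2 w(p_t) d(p_t, c_i^*)$. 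Summing over $i$, these two phases give an expected service cost bound of $3\,\opt_k(P) + O(fk\log W)$. Substituting $f = L/(k(1+\log W))$ makes the second term at most $L$, so $E[\text{service cost}] \leq 3\,\opt_k(P) + L$, and Markov's inequality applied at the threshold $6\,\opt_k(P) + 2L$ yields the claimed success probability of at least $1/2$.

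The main subtlety is verifying that the weighted version of the analysis is not spoiled by the exponential dynamic range of the weights $w(p_t) = 2^{t/h}$. This is handled by observing that $\sigma_t$ is itself proportional to $w(p_t)$, so in the key expectation identity relating one unit of expected service cost to one unit of expected facility-opening cost, the $w(p_t)$ factor cancels exactly; the total weight $W$ then enters only through the logarithmic factor in the cluster-wise bound on the number of facilities opened, which is absorbed by the choice of $f$. With that cancellation in place, the structure of the proof is identical to Meyerson's unweighted argument, and the only care needed is to consistently track $w(p)$ alongside $d(p, c_i^*)$ when invoking the triangle inequality in the post-opening phase.
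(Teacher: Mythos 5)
Your proposal is correct and follows essentially the same route as the paper: Meyerson's analysis with the optimal clusters split into geometric rings, an expected pre-opening cost of $f$ per ring times $k(1+\log W)$ rings, a triangle-inequality charge of the post-opening service cost against $3\,\opt_k(P)$, and a final application of Markov's inequality; your observation that the weight $w(p_t)$ cancels between the opening probability and the service cost is exactly the paper's device of treating each weighted point as $w(p_t)$ repeated copies of minimum weight. The only imprecision is the per-point claim $d(c_i^*,q_i)\le d(p_t,c_i^*)$ ``in expectation,'' which does not hold pointwise and is precisely what the explicit ring decomposition $S_i^j$ (bounding $d(q,c_i^*)$ by the average optimal distance in the next ring and telescoping to the factor $3$) is needed to make rigorous--- a step you implicitly rely on since you state the correct aggregate bound $3\,\opt_k(P)+L$.
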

\begin{proof}
The proof follows the standard analysis of Online Facility Location. Let $P$ is the set of points read in a phase. Instead of looking at $|P|$ distinct points with varying weights, we view it as \emph{repeated} points of unit or minimum weight. The total number of points is therefore at most $W = \Theta(h\Delta)$. 

We remind the reader that $\opt_k(P) = \minu{K \subseteq P, |K| = k}{ \sum_{p \in P} \minu{y \in K}{ \ \dist(p,y)}}$ is the optimal cost and $\cost(P)$ is the total service cost incurred by \ofl. Let $\cC^*$ be the set of corresponding facilities allocated by $\opt$, and  $c^*_i$'s denote the optimum $k$ facilities where $i \in [k]$ and $C_i^*$ the set of points from $P$ served by the facility $c^*_i$.  Let $A_i = \sum_{x \in C_i^*} \dist(x,c_i^*)$ be the service cost of $C_i^*$. We now further partition each region into \emph{rings}. Let $S_i^1$ be the first ring around $c_i^*$ that contains half the nearest points in $C_i^*$. Formally, $S_i^1 = \argminu{K, \abs{K} = \abs{C^*_i}/2}{\sum_{x \in K} \dist(x,c_i^*)}$. Furthermore, $S_i^2$ is the second ring around $c^*_i$ containing one-quarter of the points in $C_i^*$ and so on. Therefore, we can inductively define $S_i^j = \argminu{K, \abs{K} = \abs{C^*_i}/2^j}{\sum_{x \in K \backslash \cup_{l=1}^{j-1}S_i^l} \dist(x,c_i^*)}$. 
Note that $S_i^j$ may be not be uniquely identifiable, but their existence suffices for the sake of analysis. Let $A_i^j = \sum_{x \in S_i^j} \dist(x,c_i)$ be the cost of set $S_i^j$.  For a point $p$,  use $\dist_p^*$ and $\dist_p$ for its optimal cost and cost incurred in the algorithm respectively.

We look at two cases. In the first case, suppose each region has a facility open; let the facility of $S_i^j$ be $s_i^j$. We look at the cost incurred by subsequent points arriving in this region. Consider the set $S_i^j$ and let $q$ be a facility in $S_i^j$. A subsequent point $p$ incurs a cost $\dist_p = \dist(p,q)$. By triangle inequality, we have $\dist_p \leq \dist_p^* + \dist_q^*$. By definition of $S_i^j$, we have $\dist_q^* \leq \dist_z^*$ for any point $z \in S_i^{j+1}$.
We sum over all $z$ in $S_i^{j+1}$ and get $\dist_q^* \leq \frac{A_i^{j+1}}{\abs{S_i^{j+1}}}$. We therefore get $\dist_p \leq \dist^*_p  + \frac{A_i^{j+1}}{\abs{S_i^{j+1}}}$. Summing over all points is $S_i^j$, we get $\cost(S_i^j, s_i^j) \leq A_i^j + \frac{\abs{S_i^j} A_i^{j+1}}{{\abs{S_i^{j+1}}}} = A_i^j + 2 \cdot A_i^{j+1}$. Summing over all $j$'s, we get $\cost(C^*_i, c_i^*) \leq 3 A_i$. Finally, summing over $i$'s, we get that in the first case $ \cost(P,\cC^*) \leq 3\opt_k(P)$. We now look at the second case wherein each region has a facility open. The number of points is at most $W$, therefore, the number of regions is at most $k(1 + \log(W))$. The expected service cost incurred by a region before opening a facility is  at most $f$ (See Fact $1$, \cite{lang2017online}). Therefore, the total service cost $\leq f \, k(1 + \log(W)) = L$. Combining the two cases, we get that $\cost(P,\cC^*) \leq 3\opt_k(P) + L$. Note that when we store points verbatim, we do not incur any service cost. With a simple application of Markov inequality, we get that with probability at least $1/2$, $ \cost(P,\cC^*)\leq 6\opt_k(P) + 2L$.
\end{proof}

\begin{lemma}
\label{lem:oflspace}
When \ofl \ is run on a stream of $n$ points with exponentially decaying weights, with facility cost $f  = \frac{L}{k(1+\log W)}$ where $L>0$, the number of facilities produced is at most $(2 + \frac{6}{L}\opt_k(P))k(1+\log W)$, with probability at least $1/2$.
\end{lemma}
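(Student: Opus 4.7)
The plan is to mirror the ring-based analysis from the proof of Lemma~\ref{lem:oflcost}, but applied to the facility \emph{count} rather than the service cost, and then convert the resulting expectation bound into a high-probability statement via Markov's inequality. Following that proof, I would view the weighted stream as at most $W$ virtual unit-weight points, fix the optimal clustering with centers $c_1^*,\ldots,c_k^*$ and regions $C_1^*,\ldots,C_k^*$, and partition each region into the geometric rings $S_i^1,S_i^2,\ldots$ defined by $|S_i^j|=|C_i^*|/2^j$. Each region then contributes at most $1+\log|C_i^*|\le 1+\log W$ rings, for a global total of at most $k(1+\log W)$ rings.

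The heart of the argument is a bound on the expected number $E[F_i^j]$ of facilities that \ofl\ opens inside a single ring $S_i^j$. I would split $F_i^j$ into (i) the indicator that \emph{any} facility is ever opened in $S_i^j$, which contributes at most $1$, and (ii) the facilities opened after some facility $f^{ij}\in S_i^j$ is already present. Conditioning on the history just before each later arrival $p\in S_i^j$, the probability that $p$ opens a new facility is at most $d(p,q_p)/f \le d(p,f^{ij})/f$, and the triangle inequality together with the ring construction gives $d(p,f^{ij}) \le d(p,c_i^*)+d(f^{ij},c_i^*) \le d(p,c_i^*)+A_i^{j+1}/|S_i^{j+1}|$, exactly as in Lemma~\ref{lem:oflcost}. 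Summing over $p\in S_i^j$ and using $|S_i^j|=2|S_i^{j+1}|$ yields
\[
E[F_i^j] \;\le\; 1 + \frac{A_i^j + 2\,A_i^{j+1}}{f}.
\]

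Telescoping $j$ inside each cluster (using $\sum_j A_i^j = A_i$) produces $\sum_j E[F_i^j] \le (1+\log|C_i^*|) + 3A_i/f$, and summing over $i$ with $\sum_i A_i = \opt_k(P)$ yields $E[|\cC|] \le k(1+\log W) + 3\opt_k(P)/f$. Applying Markov's inequality, in exact analogy with the last line of the proof of Lemma~\ref{lem:oflcost}, shows $|\cC| \le 2k(1+\log W) + 6\opt_k(P)/f$ with probability at least $1/2$. Substituting the chosen facility cost $f = L/(k(1+\log W))$ converts $6\opt_k(P)/f$ into $(6\opt_k(P)/L)\,k(1+\log W)$, which assembles into the claimed bound $(2 + 6\opt_k(P)/L)\,k(1+\log W)$.

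The step I expect to be the main obstacle is the ``after the first facility'' bound. The facility $f^{ij}$ is itself random, and at the time a later point $p\in S_i^j$ arrives the globally closest open facility $q_p$ might very well lie outside $S_i^j$; the inequality $d(p,q_p)\le d(p,f^{ij})$ nonetheless holds deterministically because $q_p$ is the closest open facility, but to turn this pointwise statement into a bound on $E[X_p\cdot\mathbf{1}[f^{ij}\text{ already open}]]$ one must condition on the filtration generated by \ofl's random choices up to the arrival of $p$ and apply the tower law. Once this conditioning is set up cleanly, the remainder of the proof is the same telescoping sum and Markov step already used in Lemma~\ref{lem:oflcost}.
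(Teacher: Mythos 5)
Your proposal is correct and follows essentially the same route as the paper: the paper likewise charges one deterministic facility per ring (at most $k(1+\log W)$ total), bounds the expected number of additional facilities by $\sum_p \dist_p/f \le 3\opt_k(P)/f$ by directly citing the service-cost bound already established in Lemma~\ref{lem:oflcost}, and finishes with Markov's inequality. The only difference is presentational—you re-derive the telescoping ring bound and spell out the conditioning on the filtration, whereas the paper simply reuses the inequality $\sum_p \dist_p \le 3\opt_k(P)$ from the previous proof.
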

\begin{proof}
Considering the points as repeated points of minimum weight, the total number of points is at most $W$  and the total number of regions is at most $k(1 + \log W)$.  One facility in each region gives us $k(1 + \log  W)$ facilities. After opening a facility in a region, each subsequent point has probability $\frac{\dist_p}{f}$ to open a facility. Therefore, the expected number of facilities is $\sum_p \frac{\dist_p}{f}$. We showed in Lemma \ref{lem:oflcost} that $\sum_{p} \dist_p \leq 3 \ \opt_k(P)$. Hence, the expected number of facilities is at most $\frac{3  \opt_k(P)}{f} = \frac{3 \opt_k(P) k (1 + \log W)}{L}$. A simple application of Markov's inequality completes the proof. 
\end{proof}

\paragraph{$k$-median clustering.} We now state some key lemmas that will help us establish that the algorithm produces a $\bigO{1}$ approximation to the $k$-median clustering cost. We then show how these come together and present the detailed guarantees in Theorem \ref{thm:k-median-exp-decay-detail}.

\begin{lemma}
\label{lem:2}
At every phase change, with probability at least $1/2$, $\opt_k(P) > L$ if $\beta \leq 2$ and $\gamma \geq 9$.
\end{lemma}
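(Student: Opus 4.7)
The plan is to argue case-by-case according to the three conditions that can trigger a phase change in Algorithm~\ref{alg:expdecay}: (i) the accumulated service cost exceeds $\gamma L$, (ii) the facility count $|\cC|$ exceeds $(\gamma - 1) k (1 + \log W)$, or (iii) the algorithm completes the full $h \log \Delta + k + h$ points of the phase by finishing the second sub-phase. In each case the goal is to rule out $\opt_k(P) \leq L$ by contradiction, so that the contrapositive delivers the desired inequality.

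For trigger (i), I would apply Lemma~\ref{lem:oflcost} contrapositively. Assuming $\opt_k(P) \leq L$, the lemma guarantees that with probability at least $1/2$ the \ofl\ service cost is at most $6 \opt_k(P) + 2L \leq 8L$, which contradicts the trigger condition $\cost > \gamma L$ as soon as $\gamma \geq 9$. Case (ii) is symmetric using Lemma~\ref{lem:oflspace}: under $\opt_k(P) \leq L$, with probability at least $1/2$ the facility set satisfies $|\cC| \leq (2 + 6 \opt_k(P)/L) k (1 + \log W) \leq 8 k(1 + \log W)$, again contradicting the threshold $(\gamma - 1) k (1 + \log W) \geq 8 k (1 + \log W)$ for $\gamma \geq 9$. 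The failure probability of $1/2$ quoted in the lemma is exactly inherited from the \ofl\ analysis.

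Case (iii) requires a different argument, since no randomized \ofl\ guarantee covers the verbatim-storage second sub-phase. The plan is to exploit the weight growth within a phase: by the time the second sub-phase begins, $h \log \Delta$ exponential weight doublings (in the exponent) have accrued, so every one of the $k + h$ verbatim points has weight at least $w_0 \cdot \Delta$, where $w_0$ is the weight of the first point of the phase. Combined with the assumption that every non-zero pairwise distance is at least $1$, a pigeonhole argument on $k + h$ points against any $k$-median solution forces at least $h$ uncovered contributions of size $\geq w_0 \Delta$. The remaining task is to show this weight-based lower bound strictly exceeds $L$; this is where $\beta \leq 2$ is used, since it bounds the multiplicative growth of $L$ across successive phases and thus ties $L$ back to the weight scale $w_0$ of the current phase.

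I expect case (iii) to be the main obstacle, since cases (i) and (ii) reduce almost mechanically to the two \ofl\ lemmas, whereas (iii) needs an invariant that relates $L$ to the weights seen in the current phase together with a combinatorial consequence of exceeding the $k$-center budget by $h$. A subtlety to be handled carefully is that the $k + h$ verbatim points are not guaranteed to be distinct in the metric, potentially evading the naive pigeonhole count; one route is to combine the already-established bound $\cost \leq \gamma L$ on the first sub-phase (since trigger (i) did not fire) with the new weights in the second sub-phase, so that any degenerate placement of the $k+h$ points would still be inconsistent with $\opt_k(P) \leq L$ once $\beta \leq 2$ and $\gamma \geq 9$ are used together.
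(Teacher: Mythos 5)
Your cases (i) and (ii) coincide with the paper's first case: assume $\opt_k(P) \leq L$, invoke Lemmas~\ref{lem:oflcost} and~\ref{lem:oflspace} to get $\cost \leq 8L$ and at most $8k(1+\log W)$ facilities with probability $1/2$, and contradict the trigger thresholds once $\gamma \geq 9$. The gap is in case (iii), where the paper's mechanism is a \emph{doubling-plus-induction} argument that your sketch does not contain. Writing $\underline{\opt}$ and $\overline{\opt}$ for the value of $\opt$ at the start and end of the phase, the paper shows $\overline{\opt} \geq 2\,\underline{\opt}$, and then invokes the conclusion of this very lemma at the \emph{previous} phase change (which gave $\underline{\opt} > L'$ with probability $1/2$) together with $L = \beta L'$ to conclude $\overline{\opt} \geq 2\underline{\opt} > 2L' = (2/\beta)L \geq L$ when $\beta \leq 2$. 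Your plan instead tries to tie the new weight mass directly to $L$ ``via the multiplicative growth of $L$''; this does not go through as stated, because $L$ is updated as $\max(\beta L', \lambda/(c_r\gamma))$ and the second branch is not controlled by $\beta$, and because there is no phase-independent relation between $L$ and the base weight $w_0$ of the current phase. The inductive invariant $\underline{\opt} > L'$ is the missing idea, and it is also what explains why the failure probability remains $1/2$ in this case.

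Two further points. First, your pigeonhole lower bound of ``$h$ uncovered contributions of size $\geq w_0\Delta$,'' i.e.\ $h\,w_0\Delta$ in total, is quantitatively insufficient to certify $\overline{\opt} \geq 2\underline{\opt}$: the prefix of the stream can contribute up to $\underline{\opt} \leq \Delta\bigl(2^{(z+1)/h}-1\bigr)/\bigl(2^{1/h}-1\bigr) \approx (h/\ln 2)\, w_0 \Delta$, which exceeds $h w_0\Delta$ by a constant factor. You need to sum the actual geometric weights of the uncovered points, $\sum_{i} 2^{i/h}$ over the last $\npv$ indices of the phase, which evaluates to $\Delta\, 2^{(z+1)/h}/\bigl(2^{1/h}-1\bigr)$ and does dominate the bound on $\underline{\opt}$. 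Second, the distinctness subtlety you flag at the end is resolved by definition rather than by a workaround: the second sub-phase is specified to collect $k+\npv$ \emph{distinct} points, so at least $\npv$ of them lie at distance at least $1$ from any $k$ centers and the pigeonhole applies directly; no appeal to the first sub-phase's cost bound is needed.
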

\begin{proof}
The phase change is triggered in two ways, either the cost or the number of facilities grows large (more precisely, cost more that $\gamma L$ or the number of facilities greater than $(\gamma-1)k(1+\log W)$), or we read too many points. Let us look at the first case. Assume that $L \geq \opt_k(P)$, then from Lemma \ref{lem:oflcost} and \ref{lem:oflspace}, we get that with probability at least $1/2$, $\cost \leq 8L$ and the number of facilities is $\leq 8k(1+\log W)$ respectively. However with $\gamma \geq 9$, neither of the two conditions are met and therefore the premise that a phase change was triggered gives us a contradiction. Hence, in the first case, we get $L < \opt_k(P)$ with probability at least $1/2$.

In the other case, we store points exactly (incurring no additional cost). The only danger in this case is performing a phase change too early (before \opt \  has doubled). Let $ \underbar \opt$ be the value of \opt \ at the beginning of the phase, which we assume starts at time $t = z$. Since points cannot be at distance greater than $\Delta$, then
\begin{align*}
    \underbar \opt & \leq \Delta (1 + 2^{1/h} + \ldots + 2^{z/h}) \\
    & \leq \Delta \frac{2^{(z+1)/h}-1}{2^{1/h}-1}.
\end{align*}

Now let $\overline\opt$ be the value of \opt  \ after terminating the phase (which occurs after reading $k+\npv$ distinct points after the initial $\np$ points of the phase). We must prove that $\overline \opt \geq 2\underbar \opt$. Observe that after reading $k + \npv$ distinct points, we must cluster at least $h$ points across a distance of at least $1$ (since we can have at most $k$ centers). The weights of these points begin at $2^{(z+\np+1)/h}$. Therefore,
\begin{align*}
     \overline \opt &\geq \underbar\opt +  \sum_{i=z+\np}^{z+\np+\npv}2^{i/h}\\
    &= \underbar\opt +  \frac{2^{(z+h\log_2(\Delta) + h)/h} - 2^{(z+h\log_2(\Delta))/h}}{2^{1/h} - 1} \\
    & \geq \underbar \opt + \Delta\left(\frac{2^{(z+1)/h}-1}{2^{1/h}-1}\right)\\
    & \geq 2 \underbar  \opt,
\end{align*}
where the second inequality follows from straightforward arithmetic. 
Let $L'$ be the value of $L$ in the previous phase. 
Thus,
\begin{align*}
     \overline\opt \geq 2 \underbar  \opt > 2 L' = \frac{2}{\beta}L,
\end{align*}
where the second inequality holds with probability at least $1/2$, as justified above. 
Setting $\beta \leq 2$ completes the proof.
\end{proof}

\begin{lemma}
\label{lem:1}
At any part in the algorithm, we have $\cost(P,\cC) \leq \br{\gamma + \frac{1 + \cram \beta}{\beta-1}}L$.
\end{lemma}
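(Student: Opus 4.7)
The plan is to set up a recurrence on the cumulative clustering cost across phases and close it with a geometric-series argument driven by the multiplicative growth of the guess $L$. Let $L_j$ denote the value of $L$ during phase $j$; let $\cC_j$ be the facility set at the end of phase $j$ before the \ram\ call; let $\cC'_j$ be the $k$ centers returned by \ram; and let $\lambda_j = \cost(\cC_j,\cC'_j)$ be the re-clustering cost. Define $\phi_j = \cost(P_{\leq j},\cC'_j)$, where $P_{\leq j}$ is the set of all points arrived through the end of phase $j$.

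First I would establish the recurrence $\phi_j \leq \phi_{j-1} + \gamma L_j + \lambda_j$. Within phase $j$, the newly accumulated service cost is at most $\gamma L_j$, since otherwise the condition $\cost > \gamma L$ would have triggered a phase change earlier. Old points from $P_{\leq j-1}$ remain served by facilities in $\cC'_{j-1} \subseteq \cC_j$ at no greater cost than before (any new facility opened by \ofl\ can only reduce their assignment cost), so the cost of $P_{\leq j}$ against $\cC_j$ is at most $\phi_{j-1} + \gamma L_j$. Re-clustering $\cC_j$ into $\cC'_j$ then inflates the cost by at most $\lambda_j$ via the triangle inequality.

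Next I would control $\lambda_j$ using the update rule $L_{j+1} \leftarrow \max\br{\beta L_j,\,\lambda_j/(\cram \gamma)}$, which simultaneously yields $L_{j+1} \geq \beta L_j$ (geometric growth) and $\lambda_j \leq \cram \gamma L_{j+1}$ (the re-clustering cost is absorbed by the next phase's guess). Querying the total cost at an arbitrary time inside the current phase $i$ gives $\cost(P,\cC) \leq \phi_{i-1} + \gamma L_i$, and unfolding the recurrence produces
\[\cost(P,\cC) \;\leq\; \gamma \sum_{j=1}^{i} L_j \;+\; \cram \gamma \sum_{j=2}^{i} L_j.\]
Isolating the current phase's $\gamma L_i$ term and bounding the remaining tails by geometric series using $L_j \leq L_i/\beta^{i-j}$, then collecting factors, yields the claimed bound $\br{\gamma + \frac{1+\cram \beta}{\beta-1}} L$.

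The main obstacle is the recurrence step: one must argue that after many re-clustering rounds, the cost of old points against the current center set is inflated only by the \emph{local} quantity $\lambda_j$ per phase, rather than compounded multiplicatively. This requires, for each old point $p$, tracking the chain of facilities it is routed through (its original \ofl\ facility, then the successive \ram\ representatives that replace that facility at each subsequent phase boundary) and applying the triangle inequality once per hop so that the excess telescopes cleanly into $\sum_j \lambda_j$, which the update rule for $L$ then tames.
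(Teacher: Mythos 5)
Your proposal follows the same route as the paper's proof: a phase-indexed recurrence in which the service cost newly accumulated in phase $j$ is at most $\gamma L_j$ (by the phase-change test), the re-clustering cost $\lambda_j$ is charged via the update rule $L_{j+1} = \max\br{\beta L_j, \lambda_j/(\cram\gamma)}$, and the geometric growth $L_{j+1}\ge\beta L_j$ collapses the sum over past phases. The paper phrases this as an induction maintaining the invariant that the carried-over cost is at most $\frac{1+\cram\beta}{\beta-1}L$, while you unroll the recurrence into an explicit sum; these are the same argument. Your recurrence $\phi_j\le\phi_{j-1}+\gamma L_j+\lambda_j$ is sound, and the telescoping concern you raise at the end is already handled by the fact that the algorithm accumulates weights onto facilities ($w(q)\leftarrow w(q)+w(p)$), so $\lambda_j$ computed on the weighted facility set accounts for all points routed through each facility and one application of the triangle inequality per phase suffices.

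The gap is in the final step. From your own display, $\cost(P,\cC)\le\gamma\sum_{j\le i}L_j+\cram\gamma\sum_{2\le j\le i}L_j\le\gamma(1+\cram)\frac{\beta}{\beta-1}L_i$, which for the paper's parameters ($\beta=2$, $\gamma=10$, $\cram=3$) is $80L$, not the claimed $\br{\gamma+\frac{1+\cram\beta}{\beta-1}}L=17L$. The culprit is the $\sum_j\lambda_j$ term: the only bound the update rule provides is $\lambda_j\le\cram\gamma L_{j+1}$, and the extra factor of $\gamma$ there prevents the sum from collapsing to $\frac{1+\cram\beta}{\beta-1}L$, so ``collecting factors'' does not yield the stated constant. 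To be fair, the paper's own one-line induction has the identical weakness: its displayed inequality $\gamma L'+\frac{1+\cram\beta}{\beta-1}L'+\lambda\le\frac{1+\cram\beta}{\beta-1}L$ already fails on the single term $\lambda\le\cram\gamma L$ whenever $\cram\gamma>\frac{1+\cram\beta}{\beta-1}$. What your derivation does establish cleanly is $\cost(P,\cC)=O(L)$ with an explicit (larger) constant, which is all that is needed downstream in Theorem \ref{thm:k-median-exp-decay-detail}; but you should either state that weaker constant or obtain a sharper bound on $\lambda_j$ before claiming the lemma exactly as written.
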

\begin{proof}
We know that the increase of $\cost(P,\cC)$ in the current phase is upper bounded by the variable $\cost$ (see Algorithm \ref{alg:expdecay}). 
In a single phase, we have $\cost \leq \gamma L$. 
Therefore, outside the phase loop, we just need to show that it is at most $\frac{1+\cram \beta}{\beta-1}L$. 
Note that it changes only by the $\ram$ \ algorithm, which incurs cost of $\lambda \leq \cram \gamma L$. 
Suppose that it holds in the previous phase and let $L'$ be the value of $L$ in the previous phase. 
Then the cost outside the loop is $\gamma  L' + \frac{1+\cram \beta}{\beta-1}L' + \lambda  \leq \frac{1+\cram \beta}{\beta-1}L$, which finishes the proof.
\end{proof}
\begin{lemma}
\label{lem:3}
With probability at least $1/2$, $L \leq \br{1 + \frac{1}{\gamma} + \frac{1 + \cram \beta}{\gamma(\beta-1)}}\opt_k(P)$.
\end{lemma}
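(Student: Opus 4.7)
}
The plan is to bound the value of $L$ immediately after any phase change by conditioning on Lemma \ref{lem:2} and using a triangle-inequality argument together with Lemma \ref{lem:1} to control the offline cost $\lambda$. Since $L$ is only altered at phase changes, bounding it there suffices for every subsequent moment (the stream's $\opt_k(P)$ is non-decreasing, while $L$ is constant between updates).

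First I would condition on the event of Lemma \ref{lem:2}, which holds with probability at least $1/2$: the pre-update value $L$ at a phase change satisfies $L < \opt_k(P)$. The update rule is $L \leftarrow \max(\beta L,\lambda/(\cram\gamma))$, so it suffices to bound each argument of the $\max$. The $\beta L$ term is at most $\beta\,\opt_k(P)$ by Lemma \ref{lem:2}, and lies within the claimed expression for the parameter regime $\beta \le 2$, $\gamma \ge 9$, and $\cram$ a moderate constant. The main work is to bound $\lambda/(\cram\gamma)$.

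Since \ram \ is a $\cram$-approximation, $\lambda \le \cram\cdot\opt_k(\cC)$, so the next step is to relate $\opt_k(\cC)$ to $\opt_k(P)$. Let $C^*$ be an optimal $k$-median solution for $P$, so $\sum_{p\in P} w(p)\,d(p,C^*) = \opt_k(P)$. Using $C^*$ as a candidate clustering of $\cC$ and applying the triangle inequality at each facility $q\in\cC$ via any demand $p$ it serves in \ofl, together with the fact that $w(q) = \sum_{p\text{ served by }q} w(p)$, yields
\[
\opt_k(\cC) \le \sum_{q\in\cC} w(q)\,d(q,C^*) \le \sum_{p\in P} w(p)\bigl(d(q_p,p) + d(p,C^*)\bigr) = \cost(P,\cC) + \opt_k(P),
\]
where $q_p$ is the facility serving $p$. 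Plugging in Lemma \ref{lem:1}'s bound $\cost(P,\cC) \le (\gamma + (1+\cram\beta)/(\beta-1))L$ and dividing by $\cram\gamma$ gives
\[
\frac{\lambda}{\cram\gamma} \le \left(1 + \frac{1+\cram\beta}{\gamma(\beta-1)}\right)L + \frac{1}{\gamma}\opt_k(P).
\]
Applying $L < \opt_k(P)$ once more to the first term yields exactly the claimed bound on this argument of the $\max$.

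The main obstacle is the triangle-inequality step $\opt_k(\cC) \le \cost(P,\cC) + \opt_k(P)$; care is needed because each facility $q\in\cC$ carries the cumulative weight of the \ofl \ demands it serves (and verbatim-stored points appear with their own weights), so the per-point telescoping $w(q)\,d(q,C^*) = \sum_{p:\text{served by }q} w(p)\,d(q,C^*)$ must be set up correctly before invoking triangle inequality. Once this inequality is in hand, the rest is routine arithmetic combining Lemmas \ref{lem:1} and \ref{lem:2}.
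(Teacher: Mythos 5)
Your proposal is correct and follows essentially the same route as the paper: condition on the event $L' < \opt_k(P)$ from Lemma \ref{lem:2}, handle the two arguments of the $\max$ in the update rule separately, and for the $\lambda/(\cram\gamma)$ branch chain $\lambda \le \cram\,\opt_k(\cC')$ with the triangle-inequality bound $\opt_k(\cC') \le \opt_k(P) + \cost(P,\cC')$ and Lemma \ref{lem:1}. Your added care about the cumulative facility weights in the triangle-inequality step is a welcome elaboration of what the paper dismisses as ``easy to see,'' but it is not a different argument.
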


Let $L'$ and $\cC'$ denote the values of $L$ and $\cC$ in the previous phase. We condition on the event that  $L' < \opt_k(P)$, which we know from Lemma \ref{lem:2} occurs with probability at least $1/2$. From the update equation of $L$, we either have $L = \beta L'$ or $L = \frac{\lambda}{\cram \gamma}$. In the first case, we directly get $L \leq \beta \opt_k(P)$. With $\beta \leq 2$, we get the claim of the lemma.
We now look at the second case, where we have $\gamma \cram L  \leq \lambda \leq \cram \opt_k(\cC')$ from the guarantee of the \ram \ algorithm. It is easy to see that $\opt_k(\cC') \leq \opt_k(P) + \cost(P,\cC')$ by a simple application of triangle inequality on all the points. Moreover, from Lemma \ref{lem:1}, we have $\cost(P,\cC') \leq \br{\gamma + \frac{1 + \cram \beta}{\beta-1}}L' \leq \br{\gamma + \frac{1 + \cram \beta}{\beta-1}} \opt_k(P)$. Combining these, we get $L \leq \br{\frac{1}{\gamma} + 1 + \frac{1 + \cram \beta}{\gamma(\beta-1)}}\opt_k(P)$.

We now restate the theorem for the exponential decay model but tailored to Algorithm \ref{alg:expdecay} with all the algorithmic details precisely stated.
\begin{theorem}
\label{thm:k-median-exp-decay-detail}
Let $P$ be a stream of $n$ points with exponential decaying weights parametrized by the half-life parameter $h$ and let $k$ be some positive integer. Algorithm \ref{alg:expdecay} run with $\beta \leq 2, \gamma \geq 9 ,W = \bigO{h \Delta}$ on the stream $P$ outputs $k$ points, which produce an $\bigO{1}$ approximation to the optimal cost of $k$-median clustering on $P$ with high probability. The algorithm runs in time $\bigO{nk\log W}$ and uses space $\bigO{k \log W + h}$.
\end{theorem}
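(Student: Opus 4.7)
The plan is to combine the three preceding lemmas to establish the constant-factor approximation, verify the space and time bounds via direct inspection of Algorithm~\ref{alg:expdecay}, and then amplify the success probability.

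First I would chain Lemma~\ref{lem:1} and Lemma~\ref{lem:3}. Lemma~\ref{lem:1} holds deterministically and gives $\cost(P,\cC) \leq \br{\gamma + \frac{1 + \cram\beta}{\beta - 1}}L$ throughout the stream, while Lemma~\ref{lem:3} bounds $L$ by a constant multiple of $\opt_k(P)$ with probability at least $1/2$. Substituting $\beta = 2$, $\gamma = 9$, and the approximation factor $\cram$ of the offline subroutine yields $\cost(P,\cC) = \bigO{\opt_k(P)}$. Since the algorithm's final action is to apply $\ram$ to $\cC$ to output exactly $k$ centers $\cC''$, I would invoke the standard bi-criteria triangle inequality $\opt_k(\cC) \leq \opt_k(P) + \cost(P,\cC)$ together with the $\cram$-approximation guarantee of $\ram$ to conclude that $\cost(P,\cC'') \leq \cost(P,\cC) + \cram \cdot \opt_k(\cC) = \bigO{\opt_k(P)}$.

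For the space bound, note that during the first sub-phase a phase change is triggered as soon as $\abs{\cC}$ exceeds $(\gamma - 1)k(1 + \log W) = \bigO{k \log W}$, and during the second sub-phase at most $\npv + k = h + k$ additional points are stored verbatim before another phase change is forced. Combined with $W = \Theta(h\Delta)$, this yields total space $\bigO{k \log(h\Delta) + h}$. For running time, each stream element requires $\bigO{\abs{\cC}} = \bigO{k \log W}$ time to locate its nearest facility, contributing $\bigO{nk \log W}$ overall. The $\ram$ invocations at the $\bigO{\log W}$ phase boundaries (since $L$ grows geometrically from $1$ to $\bigO{\opt_k(P)}$) operate on $\bigO{k \log W}$ points each and contribute only a lower-order term.

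The main subtlety I anticipate is amplifying the constant success probability of Lemmas~\ref{lem:2} and~\ref{lem:3} to high probability, since the $1/2$ bound of Lemma~\ref{lem:3} conditions on the bound of Lemma~\ref{lem:2} in the preceding phase and these must compose correctly across all $\bigO{\log W}$ phases. The cleanest route is to run $\bigO{\log n}$ independent parallel copies of Algorithm~\ref{alg:expdecay} and output the copy with the smallest reported cost; by a Chernoff bound at least one copy satisfies the bound throughout the stream, and any logarithmic blowup in space and time can either be absorbed into the asymptotic notation or removed by a more careful single-run induction that shows the invariant $L \leq \bigO{\opt_k(P)}$ is self-correcting across phases.
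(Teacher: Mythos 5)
Your proposal is correct and follows essentially the same route as the paper: chain Lemma~\ref{lem:1} with Lemma~\ref{lem:3} and substitute constants for the approximation factor, read the space bound off the facility cap $(\gamma-1)k(1+\log W)$ plus the $k+h$ verbatim points, and amplify the per-phase $1/2$ success probability with parallel instances. Your explicit bi-criteria triangle-inequality step for the final $\ram$ reduction to exactly $k$ centers is a detail the paper leaves implicit (it is absorbed into Lemma~\ref{lem:1}'s ``at any part of the algorithm'' guarantee), but it changes nothing substantive.
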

\begin{proof}
Combining Lemma \ref{lem:1} and \ref{lem:3}, we get that 
$$\cost(P,\cC) \leq \br{\gamma + \frac{1 + \cram \beta}{\beta-1}}\br{\frac{1}{\gamma} + 1 + \frac{1 + \cram \beta}{\gamma(\beta-1)}}\opt_k(P).$$
Setting $\beta = 2, \gamma = 10$ and $\cram = 3$ gives us that $\cost(P,\cC) \leq 40 \opt_k(P)$.
 
We emphasize that we give a \emph{streaming} guarantee, that is, given a fixed point in the stream, it will hold for all the points seen till then. Note that in the proofs of Lemma \ref{lem:2} and \ref{lem:3}, we only need that the random event hold with probability at least $1/2$ \emph{only} in the previous phase. We can therefore amplify the probability of success by running $\log(1/\delta)$ parallel instances to get the bounds to hold with probability at least $1-\delta$. The space bound of the algorithm is $\bigO{k \log W + h} = \bigO{k \log (h \Delta) + h}$, which simply follows from the condition in the algorithm that we don't allow the number of facilities to grow beyond $\bigO{k(1+\log(W)}$ combined with the fact that we store $k+\npv$ points verbatim in the second sub-phase.
\end{proof}

\paragraph{Extensions.} As in \cite{lang2017online}, our algorithm can easily be extended to other distance functions that satisfy the approximate triangle inequality (see Definition \ref{def:ati}). In particular, we get constant approximate algorithms for  $k$-means clustering and $M$-estimators in the exponential decay model.

\bibliography{refs}
\bibliographystyle{alpha}

\end{document}